\newtheorem{theorem}{Theorem}[section]
\newtheorem{corollary}[theorem]{Corollary}
\newtheorem{lemma}[theorem]{Lemma}
\newtheorem{definition}[theorem]{Definition}
\newtheorem{remark}[theorem]{Remark}
\numberwithin{equation}{section}
\def\Z{{\Bbb Z}} \def\R{{\cal R}} \def\I{{\cal I}}
\def\la{\lambda} \def\al{\alpha} \def\be{\beta}
\def\ol{\overline} \def\wh{\widehat}
\begin{document}
\title{
Double Constacyclic Codes over Two Finite Commutative Chain Rings}
\author{Yun Fan\\
{\small School of Mathematics and Statistics}\\
{\small Central China Normal University, Wuhan 430079, China}\\[9pt]
Hualu Liu\\
{\small School of Science}\\
{\small Hubei University of Technology, Wuhan 430068, China}}
\date{} 
\maketitle

\insert\footins{\footnotesize{\it Email address}:
yfan@ccnu.edu.cn (Yun Fan), hwlulu@aliyun.com (Hualu Liu).}

\begin{abstract}
Many kinds of codes which possess two cycle structures over two
special finite commutative chain rings,
such as ${\Bbb Z}_2{\Bbb Z}_4$-additive cyclic codes
and quasi-cyclic codes of fractional index etc., were proved asymptotically good.
In this paper we extend the study in two directions: we consider any two
finite commutative chain rings with a surjective homomorphism
from one to the other, and consider double constacyclic structures.
We construct an extensive kind of double constacyclic codes
over two finite commutative chain rings. And,
developing a probabilistic method suitable for quasi-cyclic codes over fields,
we prove that the double constacyclic codes
over two finite commutative chain rings are asymptotically good.

\medskip
{\bf Key words}: Finite field; finite chain ring;
constacyclic code; double constacyclic; asymptotically good.
\end{abstract}

\section{Introduction}\label{introduction}
All rings in this paper are commutative and with identity.
By $R^\times$ we denote the multiplicative group
of units (invertible elements) of a ring $R$.
A ring is called a {\em finite chain ring} if it is finite and
its ideals form a chain with respect to the inclusion relation.
A finite ring $R$ is a chain ring if and only if there is
a nilpotent element $\pi$ of $R$ such that the quotient $\ol R=R/R\pi=:F$ is a field,
which is called the {\em residue field} of $R$; for $a\in R$,
$\ol a\in F$ denotes the residue image of~$a$;
see Remark \ref{chain ring} below for details.
Finite fields (i.e., Galois fields), residue integer rings $\Z_{p^s}$
modulo prime power $p^s$, Galois rings etc. are special kinds of finite chain rings.

A code sequence $C_1,C_2,\cdots$ is said to be {\em asymptotically good}
if the code length of $C_i$ goes to infinity, and both the
rates and the relative minimum distances of $C_i$'s are positively 
bounded from below; see Definition \ref{goodness} below for details.
A class of codes is said to be asymptotically good if there is
an asymptotically good code sequence in the class.   

It is well-known that linear codes over a finite field are asymptotically good,
see \cite{G}, \cite{V}. More precisely, in \cite{P67}
the relative minimum distance of linear codes are proved
to be asymptotically distributed at the so-called GV-bound.
In \cite{BF}, the asymptotic distribution
of the relative minimum distance of general codes
(without any more algebraic structures) is characterized.
On the other hand, for the quasi-abelian codes with index going to infinity,
\cite{FL15} proved that the relative minimum distances of such codes
 are also asymptotically distributed at the GV-bound.

Cyclic codes over finite fields are studied
and applied extensively, e.g. see~\cite{HP}.
However it is still an attractive open question: are cyclic codes over a finite field
asymptotically good? For example, see \cite{MW06}.

Quasi-cyclic codes of index $2$ over a finite fields are asymptotically good,
see \cite{CPW}, \cite{C}, \cite{K}.
Bazzi and Mitter \cite{BM} extended
the result to quasi-abelian codes of index $2$.
Soon after, \cite{MW} proved that
binary double even quasi-cyclic codes of index $2$ are asymptotically good.

Self-dual quasi-cyclic codes (with index going to infinity)
over a finite field are asymptotically good,
see \cite{D}, \cite{LS}.
Based on Artin's primitive element conjecture,
Alahmadi et al.~\cite{AOS} proved that, if $q$ is not a square,
 $q$-ary self-dual quasi-cyclic codes of index $2$ are asymptotically good.
In both \cite{Lin} and \cite{LF}, the asymptotic goodness
of any $q$-ary self-dual quasi-cyclic codes of index $2$
is obtained. 

In \cite{FL16} we introduced the quasi-cyclic codes of fractional index
over finite fields, and proved the asymptotically goodness of them.
Mi and Cao \cite{MC} extended the result.
 Gao et al. \cite{GFF} studied
the algebraic structure of quasi-cyclic codes of index $1\frac{1}{2}$.
Aydin and Halilovi\'c \cite{AH} introduced the so-called multi-twisted codes
over finite fields, which are much more extensive than
the quasi-cyclic codes of fractional index.

Hammons et al. \cite{HKCSS} initiated the study on coding over finite rings.
Research on codes over finite chain rings is developed very much,
e.g., \cite{DL}, \cite{FNKS}, \cite{NS}.

On the other hand,
$\Z_2\Z_4$-additive codes appeared in Delsarte \cite{Del}.
It is extended to $\Z_{p^r}\Z_{p^s}$-additive codes in
Aydogd and Siap \cite{AS}.
Abualrub et al.~\cite{ASA} introduced $\Z_2\Z_4$-additive cyclic codes.
Borges et al. generalized the results on $\Z_2\Z_4$-additive cyclic codes
 to $\Z_{p^r}\Z_{p^s}$-additive cyclic codes in \cite{BFT18}.
Gao et al.~\cite{GSWF} investigated the double cyclic codes
(quasi-cyclic codes of index $2$) over $\Z_4$.
In \cite{Liu} and \cite{FL19}, we found that $\Z_2\Z_4$-additive cyclic codes
 are asymptotically good. It is extended to that
 $\Z_{p^r}\Z_{p^s}$-additive cyclic codes are asymptotically good,
see Yao et al.~\cite{YZK}.
On the  other hand, Gao et al. \cite{GH} show that
$\Z_4$-double cyclic codes are asymptotically good.

Generalizing $\Z_2\Z_4$ codes, $\Z_{p^r}\Z_{p^s}$ codes etc.,
Borges et al.~\cite{BFT16} introduced a general type of codes
over two finite chain rings $R_1$ and $R_2$
with an epimorphism (i.e., surjective homomorphism) $R_1\to R_2$.
They called these codes by $R_1R_2$-linear codes,
and by $R_1R_2$-linear cyclic codes if further cyclic structures are afforded;
They investigated the algebraic structures of these codes.

In this paper we introduce a more general type of codes as follows.
Let $R$ and $R'$ be finite chain rings
with an epimorphism $\rho:R\!\to\! R'$,
hence they have the same residue field $\ol R\!=\!\ol{R'}\!=:\!F$.
Let $\la\in R^\times$, and $\la'=\rho(\la)\in R'^\times$;
hence $\ol\la=\ol{\la'}\in F^\times$.
Let $t={\rm ord}_{F^\times}\!(\ol\la)$
be the order of the residue element $\ol\la$ in the unit group $F^\times$.
Assume that $\al,\al'$ are positive integers
such that $\al\equiv \al'~({\rm mod}~t)$ and $\gcd(\al,t)=1$.
As usual, $R[X]$ denotes the polynomial ring over $R$.
For any integer $n>0$ the quotient ring
$R[X]\big/\langle X^{\al n}-\lambda \rangle$
of $R[X]$ modulo the ideal $\langle X^{\al n}-\lambda \rangle$
generated by $X^{\al n}-\lambda$ is an $R[X]$-module.
Through the epimorphism~$\rho$,
the quotient $R'[X]\big/\langle X^{\al'\! n}-\lambda'\rangle$
is also an $R[X]$-module.
We define {\em double constacyclic codes over $(R',R)$} as follows
 (please see Definition \ref{R'R-} below for more details):
\begin{itemize}
\item\vskip-7pt
Any $R[X]$-submodule $C$ of
 $\big(R'[X]\big/\langle X^{\al' n}-\lambda' \rangle\big) \times
  \big( R[X]\big/\langle X^{\al n}-\lambda \rangle\big) $
 is said to be an {\em $(R',R)$-linear $(\la',\la)$-constacyclic codes},
 the pair $(\al'n, \al n)$ is called the {\em cycle length} of $C$,
 the fraction $\al'/\al$ is called the {\em ratio} of $C$.
\end{itemize}
\vskip-7pt
With notation as above, we will prove that such codes are asymptotically good.

\begin{theorem}\label{main}
There are positive integers $n_1,n_2,\cdots$,
and for each $n_i$, $i=1,2,\cdots$, there is an
$(R',R)$-linear $(\la',\la)$--constacyclic code $C_i$ of cycle length $(\al' n_i, \al n_i)$
such that the code sequence $C_1,C_2,\cdots$ is asymptotically good.
\end{theorem}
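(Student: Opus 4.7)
The plan is to adapt the probabilistic method developed for quasi-cyclic and fractional quasi-cyclic codes over fields in \cite{FL15,FL16,FL19} to the present double constacyclic setting over two chain rings. Let $p$ denote the characteristic of the residue field $F$. First I would choose the sequence of lengths $n_1,n_2,\ldots$ from an arithmetic progression coprime to $p\alpha\alpha' t$; the hypotheses $\alpha\equiv\alpha'\pmod{t}$ and $\gcd(\alpha,t)=1$ guarantee that $\lambda$ possesses an $\alpha n$-th root in a finite extension and, crucially, that $X^{\alpha n}-\lambda$ and $X^{\alpha' n}-\lambda'$ reduce to separable polynomials over $F$ whose irreducible factorizations correspond to one another under $\rho$.

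Using Hensel lifting together with the Chinese Remainder Theorem, I would then obtain direct product decompositions
\[ R[X]/\langle X^{\alpha n}-\lambda\rangle \cong \prod_j R_j, \qquad R'[X]/\langle X^{\alpha' n}-\lambda'\rangle \cong \prod_k R'_k, \]
where each factor is a local chain ring finite over $R$ or $R'$ respectively, and the $R[X]$-action on both sides is diagonal. Because of the compatibility imposed by $\alpha\equiv\alpha'\pmod{t}$, the epimorphism $\rho$ identifies certain factors on the left with factors on the right up to extension of scalars; this aligns the two product decompositions so that the diagonal $R[X]$-action on the combined module $M:=\bigl(R'[X]/\langle X^{\alpha' n}-\lambda'\rangle\bigr)\times\bigl(R[X]/\langle X^{\alpha n}-\lambda\rangle\bigr)$ breaks into manageable local pieces.

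The construction is random: I would draw $(u',u)\in M$ uniformly and take the code $C:=R[X]\cdot(u',u)$. The rate analysis is the easier half: via the CRT decomposition, a routine counting over local chain rings shows that for a positive fraction of $(u',u)$ the annihilator of $(u',u)$ is small in each component, whence $\log_{|F|}|C|$ is at least a positive constant times $(\alpha+\alpha')n$. The distance analysis is the core of the argument: for each nonzero $v\in M$ of Hamming weight at most $\delta(\alpha+\alpha')n$, one bounds the number of $(u',u)$ such that $v\in R[X]\cdot(u',u)$, and a GV-type volume estimate on the number of low-weight elements of $M$, combined with local-ring annihilator counts, yields a union bound strictly less than $|M|$ provided $\delta$ is below a suitable asymptotic threshold.

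The main difficulty lies in this distance step. A low-weight vector in $M$ may concentrate its support in either factor or split between them, and the resulting annihilator estimates involve both $R$ and $R'$, the nilpotency indices of their maximal ideals, and the distinct cycle lengths $\alpha n,\alpha' n$; one must produce an upper bound uniform across these cases, using the alignment of CRT components via $\rho$ to absorb the asymmetry between the two rings. Once the union bound is in place, the intersection of the positive-probability events for rate and distance is nonempty for every $n_i$, and selecting any such $(u',u)$ produces the code $C_i$ whose existence is asserted in Theorem \ref{main}.
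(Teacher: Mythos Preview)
Your proposal has two concrete gaps that would prevent the argument from closing. First, choosing $n_i$ merely coprime to $p\alpha\alpha' t$ is not sufficient for the distance union bound. All arguments of this type (including \cite{BM,FL15,FL16}) require the minimum degree $\mu(n)$ of a nontrivial irreducible factor of $X^{n}-1$ over $F$ to satisfy $\log_q n/\mu(n)\to 0$; without this, the number of $F[X]$-submodules of each dimension is not adequately controlled and the sum over low-weight $v$ (equivalently, over multipliers $f$) diverges. The paper selects the $n_i$ precisely so that $\mu(n_i)>\log_q n_i$ and $\log_q n_i/\mu(n_i)\to 0$, and this is what drives the exponent in the expectation estimate to $-\infty$.

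Second, your Hensel/CRT route needs $X^{\alpha n}-\overline\lambda$ to be separable over $F$, hence $p\nmid\alpha$; but the theorem imposes no such restriction on $\alpha$. Even when separability holds, the two quotients have degrees $\alpha n$ and $\alpha' n$, so their CRT factorizations have different shapes and are not aligned by $\rho$ in any evident way. The paper avoids both problems: over $F$ it carves out from each quotient a specific $(n{-}1)$-dimensional ideal $\mathcal{I}_{\overline\lambda,\alpha}$ (resp.\ $\mathcal{I}_{\overline\lambda,\alpha'}$) and builds explicit $\sigma_{\overline\lambda}$-$F[X]$-isomorphisms $\tau_{\overline\lambda,\alpha},\tau_{\overline\lambda,\alpha'}$ from the common semisimple module $\mathcal{I}\subset F[X]/\langle X^{n}-1\rangle$, so that the random code is drawn from $\mathcal{I}_{\overline\lambda,\alpha'}\times\mathcal{I}_{\overline\lambda,\alpha}$ and analyzed entirely through the factorization of $X^{n}-1$ (which \emph{is} separable). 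The passage to chain rings is then a one-line reduction: the minimal ideals $R\pi^{\ell-1}\cong F$ and $R'\pi'^{\ell'-1}\cong F$ give weight-preserving $R[X]$-monomorphisms $\eta_{\alpha},\eta'_{\alpha'}$ from the field quotients into the ring quotients, so the good field codes embed directly. This bypasses all the local-chain-ring annihilator bookkeeping that your outline leaves open.
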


Obviously, the asymptotic goodness of many kinds
of codes mentioned above are straightforward
consequences of the theorem.
Further, we mention two special interesting cases.

If $R'=R=F$ is a finite field (which is an important case),
then $\la'=\la$ and, following notation of \cite{AH},
we also call the $(F,F)$-linear $(\la,\la)$-constacyclic codes of ratio $\al'/\al$ by
{\em double $\la$-twisted codes of ratio $\al'/\al$} over~$F$.
In particular:

$\bullet$ if $\al'=\al$, then double $\la$-twisted codes of ratio $\al'/\al$ are just quasi-constacyclic codes of index $2$;

$\bullet$ if $\la=1$ (hence $\al,\al'$ are arbitrary positive integers),
then $(F,F)$-linear $(1,1)$-constacyclic codes of ratio $\al'/\al$
are double cyclic codes of ratio $\al'/\al$;
and they are just quasi-cyclic codes of index~$1\frac{1}{\al}$
once $\al'=1$.

\begin{corollary}\label{twisted}
The double $\la$-twisted codes of ratio $\al'/\al$
over a finite field $F$ are asymptotically good.
\end{corollary}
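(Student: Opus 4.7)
The plan is to read the statement as the specialization of Theorem~\ref{main} in which both chain rings collapse to the finite field~$F$, so no new argument is needed beyond recognizing the specialization and checking that the hypotheses are inherited.

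\emph{Step 1 (Match the setup).} I would take $R'=R=F$ in the general framework, let $\rho\colon R\to R'$ be the identity epimorphism, and set $\la'=\la$. Because $F$ is its own residue field, $\ol\la=\ol{\la'}=\la\in F^\times$, and $t={\rm ord}_{F^\times}(\la)$ is simply the multiplicative order of $\la$ in $F^\times$. The standing conditions $\gcd(\al,t)=1$ and $\al\equiv\al'\pmod t$ from the general construction remain the operative hypotheses (they are exactly the conditions under which the pair $(\al',\al)$ defines a meaningful ratio for a $\la$-twisted structure). Under this identification, the ambient $R[X]$-module from Definition~\ref{R'R-} becomes
\[
 \bigl(F[X]\big/\langle X^{\al' n}-\la\rangle\bigr)\times
 \bigl(F[X]\big/\langle X^{\al n}-\la\rangle\bigr),
\]
and its $F[X]$-submodules are, by definition, the double $\la$-twisted codes of ratio $\al'/\al$ over $F$.

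\emph{Step 2 (Invoke Theorem~\ref{main}).} Theorem~\ref{main} supplies lengths $n_1,n_2,\ldots$ and $(R',R)$-linear $(\la',\la)$-constacyclic codes $C_i$ of cycle length $(\al'n_i,\al n_i)$ forming an asymptotically good sequence. Under the identification of Step~1, each such $C_i$ is precisely a double $\la$-twisted code of ratio $\al'/\al$ over $F$, so the sequence $(C_i)$ directly witnesses the corollary.

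\emph{Where the work lies.} The corollary itself contains no new difficulty: all of the analytic content (the probabilistic construction of codes together with the simultaneous control of rate and relative distance) is already packaged into Theorem~\ref{main}. The only substantive check is that the field case does not trivialize any hypothesis; and indeed the conditions on $t$, $\al$, $\al'$ are retained verbatim and the constructions remain non-degenerate. Consequently the proof reduces to the specialization above, with no further estimation required.
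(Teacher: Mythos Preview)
Your proposal is correct and matches the paper's approach: the corollary is stated immediately after Theorem~\ref{main} as the direct specialization $R'=R=F$, $\rho={\rm id}$, $\la'=\la$, with no separate proof given. The only content is the identification of double $\la$-twisted codes with $(F,F)$-linear $(\la,\la)$-constacyclic codes, which you have spelled out correctly.
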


In particular, the following two hold.

\begin{corollary}\label{quasi-constacycic}
The quasi-constacyclic codes of index $2$ over any finite field
are asymptotically good.
\end{corollary}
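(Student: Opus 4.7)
The plan is to derive Corollary \ref{quasi-constacycic} as an immediate specialization of Corollary \ref{twisted} to the case $\alpha=\alpha'=1$. With this choice the ratio $\alpha'/\alpha$ equals $1$, and the ambient module in the definition of an $(F,F)$-linear $(\lambda,\lambda)$-constacyclic code becomes
\[
\bigl(F[X]/\langle X^{n}-\lambda\rangle\bigr)\times\bigl(F[X]/\langle X^{n}-\lambda\rangle\bigr),
\]
whose $F[X]$-submodules are precisely the quasi-constacyclic codes of index $2$ over $F$ with shift constant $\lambda$ and block length $n$. This identification is exactly the content of the first bullet following the statement of Corollary \ref{twisted}.

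The numerical side conditions of Theorem \ref{main} are automatic for this parameter choice: writing $t=\mathrm{ord}_{F^\times}(\lambda)$, one has $\alpha\equiv\alpha'\pmod{t}$ and $\gcd(\alpha,t)=\gcd(1,t)=1$ trivially. Therefore Corollary \ref{twisted}, already at our disposal, furnishes a sequence of $(F,F)$-linear $(\lambda,\lambda)$-constacyclic codes of cycle length $(n_{i},n_{i})$ whose rates and relative minimum distances are bounded away from zero. Under the module identification above this is precisely a sequence of quasi-constacyclic codes of index $2$ over $F$ with constacyclic shift $\lambda$, establishing the corollary.

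I do not anticipate a substantive obstacle: all the analytic and probabilistic content has already been deployed upstream, in Theorem \ref{main} and its field specialization Corollary \ref{twisted}. The only point demanding any care is the dictionary between the $F[X]$-module formulation used in the definition of $(R',R)$-linear constacyclic codes and the classical word-level definition of a quasi-constacyclic code of index $2$; this amounts to observing that multiplication by $X$ on the product module simultaneously constacyclic-shifts each of the two length-$n$ blocks, which is the defining symmetry of the index-$2$ quasi-constacyclic class.
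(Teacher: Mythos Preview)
Your proposal is correct and matches the paper's approach: the paper gives no separate proof, simply noting (in the bullet preceding Corollary~\ref{twisted}) that when $\alpha'=\alpha$ the double $\lambda$-twisted codes are exactly the quasi-constacyclic codes of index~$2$, so Corollary~\ref{quasi-constacycic} follows at once from Corollary~\ref{twisted}. Your choice $\alpha=\alpha'=1$ is the cleanest instance of this and makes the side conditions $\alpha\equiv\alpha'\pmod t$ and $\gcd(\alpha,t)=1$ trivially true; the only cosmetic slip is that the bullet you cite appears just \emph{before} Corollary~\ref{twisted}, not after it.
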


\begin{corollary}\label{fractional}
For any positive integer $\al$,
the quasi-cyclic codes of index $1\frac{1}{\al}$ over any finite field
are asymptotically good.
\end{corollary}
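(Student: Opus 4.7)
The plan is to derive Corollary \ref{fractional} as an immediate specialization of Theorem \ref{main}. Given a finite field $F$ and a positive integer $\al$, I would take $R=R'=F$ with $\rho:F\to F$ the identity map, and choose $\la=\la'=1$ together with $\al'=1$. Since $\ol\la=1\in F^\times$, the order $t={\rm ord}_{F^\times}(1)$ equals $1$, so both hypotheses $\al\equiv\al'\pmod{t}$ and $\gcd(\al,t)=1$ of the main theorem are vacuously satisfied.

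Under these choices, an $(F,F)$-linear $(1,1)$-constacyclic code of cycle length $(n,\al n)$ is, by the definition set out in the introduction, precisely an $F[X]$-submodule of
$$\bigl(F[X]/\langle X^n-1\rangle\bigr)\times\bigl(F[X]/\langle X^{\al n}-1\rangle\bigr).$$
As noted explicitly in the second bullet preceding the corollary, such a submodule is exactly a quasi-cyclic code of index $1\tfrac{1}{\al}$ in the sense of \cite{FL16}; its ambient $F$-linear block length is $(1+\al)n$.

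Now Theorem \ref{main}, applied with the parameters above, yields positive integers $n_1,n_2,\ldots$ tending to infinity together with codes $C_i$ of cycle length $(n_i,\al n_i)$ whose rates and relative minimum distances are positively bounded from below. The $F$-linear block lengths $(1+\al)n_i$ then also tend to infinity, and passing from the cycle-length description to the block-length description only rescales the denominator used in forming rate and relative distance by the fixed constant $1+\al$; both quantities therefore remain positively bounded from below. Hence the sequence $C_1,C_2,\ldots$ is asymptotically good as a family of quasi-cyclic codes of index $1\tfrac{1}{\al}$ over $F$, which is exactly the assertion of the corollary.

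The main obstacle is essentially nonexistent: once one has identified the classical notion of quasi-cyclic code of fractional index $1\tfrac{1}{\al}$ with the double-constacyclic $F[X]$-module above, the corollary is a direct invocation of Theorem \ref{main}. The only point requiring care is that the order-$t$ hypothesis of the theorem degenerates to a trivial condition when $\la=1$, which is why no divisibility restriction on $\al$ appears in the corollary.
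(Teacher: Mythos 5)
Your specialization ($R=R'=F$, $\rho={\rm id}$, $\la=\la'=1$ so $t=1$, $\al'=1$) is exactly how the paper obtains Corollary \ref{fractional} from Theorem \ref{main} (via Corollary \ref{twisted} and the identification stated in the bullet points of Section \ref{introduction}), and your handling of the trivialized congruence and gcd conditions and of the constant rescaling of length is correct. So your proof is correct and takes essentially the same route as the paper.
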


If $R'=\Z_{p^r}$ and $R=\Z_{p^s}$ with $r\le s$, 
then $(R',R)$-linear $(\la',\la)$-constacyclic codes are just
$(\Z_{p^r},\Z_{p^s})$-additive $(\la',\la)$-constacyclic codes.

\begin{corollary}\label{Z_pZ_p}
The $(\Z_{p^r},\Z_{p^s})$-additive $(\la',\la)$-constacyclic codes of ratio $\al'/\al$
are asymptotically good. In particular,
$\Z_{p^r}\Z_{p^s}$-additive cyclic codes of any ratio $\al'/\al$
are asymptotically good.
\end{corollary}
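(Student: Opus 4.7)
The plan is to derive this corollary as an immediate specialization of Theorem~\ref{main}. What has to be checked is simply that the hypotheses of the theorem are met when $R'=\Z_{p^r}$ and $R=\Z_{p^s}$ (with $r\le s$), together with the translation of terminology.

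First, I would verify the ring-theoretic setup. The ring $\Z_{p^s}$ is a finite chain ring with maximal ideal generated by the nilpotent element $\pi=p$, whose residue field is $\Z_{p^s}/p\Z_{p^s}\cong\mathbb F_p$; the same holds for $\Z_{p^r}$. The canonical reduction map
\[
 \rho:\Z_{p^s}\longrightarrow\Z_{p^r},\qquad a\longmapsto a\bmod p^r,
\]
is a surjective ring homomorphism, and the two residue fields are canonically identified with $\mathbb F_p$. Thus the pair $(R,R')=(\Z_{p^s},\Z_{p^r})$ together with $\rho$ satisfies the hypotheses preceding Theorem~\ref{main}.

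Next, for any unit $\la\in\Z_{p^s}^\times$, $\la'=\rho(\la)$ is a unit of $\Z_{p^r}$ with the same residue image $\ol\la\in\mathbb F_p^\times$, so $t:=\mathrm{ord}_{\mathbb F_p^\times}(\ol\la)$ divides $p-1$. Given any ratio $\al'/\al$ that actually appears in the statement (i.e.\ with $\al\equiv\al'\pmod t$ and $\gcd(\al,t)=1$, as in the hypothesis of Theorem~\ref{main}), an $(\Z_{p^r},\Z_{p^s})$-additive $(\la',\la)$-constacyclic code of ratio $\al'/\al$ is by Definition~\ref{R'R-} exactly an $(R',R)$-linear $(\la',\la)$-constacyclic code. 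Applying Theorem~\ref{main} produces cycle lengths $(\al'n_i,\al n_i)$ and codes $C_i$ of those cycle lengths forming an asymptotically good sequence, which is the first assertion.

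For the second (``in particular'') assertion, I would specialize to $\la=\la'=1$. Then $\ol\la=1\in\mathbb F_p^\times$, so $t=\mathrm{ord}(1)=1$, and the side conditions $\al\equiv\al'\pmod t$ and $\gcd(\al,t)=1$ hold trivially for every pair of positive integers $\al,\al'$. Thus Theorem~\ref{main} applies to arbitrary ratio $\al'/\al$, yielding asymptotic goodness of $\Z_{p^r}\Z_{p^s}$-additive cyclic codes of any ratio $\al'/\al$. There is no real obstacle here beyond bookkeeping: the only thing one must be careful about is matching the ordering of the pair and checking that the unit/order conditions in the hypothesis of Theorem~\ref{main} are vacuous in the cyclic case, which is what guarantees that \emph{every} ratio $\al'/\al$ is allowed.
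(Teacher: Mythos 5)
Your proposal is correct and follows essentially the same route as the paper: the corollary is obtained there, too, as a direct specialization of Theorem~\ref{main}, using that $\Z_{p^s}$ and $\Z_{p^r}$ ($r\le s$) are finite chain rings with common residue field $\mathbb{F}_p$ and the reduction map $a\mapsto a\bmod p^r$ as the epimorphism $\rho$, with the cyclic case $\la=\la'=1$ giving $t=1$ so that every ratio $\al'/\al$ is admissible. Your explicit check of the hypotheses and of the vacuity of the order conditions in the cyclic case is exactly the bookkeeping the paper leaves implicit.
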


To prove Theorem \ref{main}, we develop a probabilistic method, which was
proved effective for the quasi-cyclic codes of index $2$ over finite fields,
to a method suitable for the double twisted codes over finite fields.
That is a special case of Theorem \ref{main}. 
Then, using the minimal ideals of $R'$ and $R$,
we make a reduction of the main result over general finite chain rings
to the case over finite fields.
The two skills are another contributions of the paper.

In Section \ref{preliminaries} we describe
the related notation more detailed, and sketch necessary preliminaries.
In Section \ref{double twisted over fields}, we consider the case
that $R'=R=F$ is a finite field, and investigate the
double twisted codes over $F$; by developing a probabilistic method,
we prove the main result Theorem \ref{main} in that case,
i.e., Theorem \ref{c good sequence over F} below.
Finally, Section \ref{R'R c-cyclic over rings} is devoted to
completing a proof of Theorem \ref{main} for general case,
i.e., Theorem \ref{c good sequence over R} bellow,
which is a more precise version of Theorem \ref{main}.

\section{Preliminaries}\label{preliminaries}

In this paper any ring $R$ is commutative and with identity $1_R$ (or $1$ for short).
Subrings and ring homomorphisms are all identity-preserving.
$R^\times$ denotes the multiplicative unit group of $R$.
For any set $S$, $|S|$ denotes the cardinality of~$S$.

\begin{remark}\label{chain ring}\rm
A finite ring $R$ is called a {\em chain ring}
if all ideals of $R$ form a chain by the inclusion relation.
It is easy to see that a finite ring $R$ is a chain ring if and only if
$R$ has a nilpotent element $\pi$ such that $\ol R=R/R\pi=:F$ is a field,
hence $R\pi=J(R)$ is the radical of $R$; e.g., see \cite[Lemma 2.4]{F21}.
$F=\ol R$ is called the residue field of $R$;
 for $a\in R$, $\ol a\in F$ denotes the residue image of~$a$.

In the following we assume that $R$ is a finite chain ring with radical $J(R)=R\pi$,
$\pi^\ell=0$ but $\pi^{\ell-1}\ne 0$,
the positive integer $\ell$ is called the {\em nilpotency index} of~$\pi$
(and, of the radical $R\pi$). Then the following hold:
\begin{itemize}
\item \vskip-7pt
$R\supsetneq R\pi\supsetneq\cdots\supsetneq R\pi^{\ell-1}\supsetneq R\pi^\ell=0$
are the all ideals of $R$.
\item \vskip-4pt
$F:=R/R\pi={\rm GF}(p^r)$ is a finite field (Galois field) with
$|F|=q=p^r$, where $p$ is a prime, $F$ is called the residue field of $R$.
\item \vskip-4pt
$R\pi^i/R\pi^{i+1}\cong F$ for $i=0,1,\cdots,\ell-1$,
and the cardinality $|R|=q^\ell$.
\end{itemize}
\vskip-7pt
For more details, please see \cite{McD} or \cite{N08}.
\end{remark}

There exist several weight functions ${\rm w}$ on $R$,
e.g., Hamming weigh, homogeneous weight, etc.
And each weight ${\rm w}$ on $R$ is extended in a natural way to
 a weight function on the $R$-module
$R^n=\{a=(a_0,a_1,\cdots,a_{n-1})\,|\,a_i\in R\}$,
denote by ${\rm w}$ again, which induces a distance on $R^n$
in a natural way: ${\rm d}(a,b)={\rm w}(a-b)$, for all $a,b\in R^n$.
Then for any code $C\subseteq R^n$,
the minimum distance ${\rm d}(C)$ is defined as usual.
If $C$ is linear (i.e., $C$ is an $R$-submodule of $R^n$),
then ${\rm d}(C)={\rm w}(C):=\min\{{\rm w}(c)\,|\,0\ne c\in C\}$,
called the minimum weight of~$C$.

For a code $C\subseteq R^n$,
how to measure the code length and the information length of $C$?
If  $R=F$ is a field (i.e., $\ell=1$) and $|F|=q$,
then $n$ is the code length, while $\log_q|C|$ is the information length of $C$.
Because of the Grey map, for $\Z_2^\alpha\Z_4^\beta$-codes,
an element of $\Z_4$ maybe viewed as an element of  $\Z_2\times\Z_2$.
For $C\subseteq \Z_2^\alpha\Z_4^\beta$,
in some literature; e.g., see \cite{ASA, AS, BFT18, FL19, YZK},
the information length of $C$ is defined to be $\log_2|C|$,
and define the rate of $C$ by ${\rm R}(C)=\frac{\log_2|C|}{\al+2\be}$.

Therefore we define the code length of $C\subseteq R^n$ to be $n\ell$,
and the information length of $C$ to be $\log_q|C|$.
Then the relative minimum distance of $C$ is defined by
$\Delta(C)=\frac{{\rm d}(C)}{n\ell}$,
and the rate of $C$ is defined by ${\rm R}(C)=\frac{\log_q|C|}{n\ell}$.

\begin{definition}\label{goodness}\rm
A sequence of codes $C_1,C_2,\cdots$, where $C_i\subseteq R^{n_i}$,
is said to be {\em asymptotically good} if the length $n_i\ell$ goes to infinity
and there is a positive real number $\delta$ such that
${\rm R}(C_i) \ge\delta$ and $\Delta(C_i)\ge\delta$ for all $i=1,2,\cdots$.

A class of codes is said to be {\em asymptotically good} if there is an
asymptotically good sequence $C_1,C_2,\cdots,$ with every $C_i$ inside the class.
\end{definition}

\begin{lemma}\label{indep of weight}
Let $C_1,C_2,\cdots$, where $C_i\subseteq R^{n_i}$,  be a sequence of codes
with the positive integers $n_i$ going to infinity.
If for a weight function ${\rm w}$ on $R$
the sequence $C_1,C_2,\cdots$ is asymptotically good, then
 for any weight function ${\rm w}'$ on~$R$
 the sequence $C_1,C_2,\cdots$ is asymptotically good.
\end{lemma}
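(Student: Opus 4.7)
The plan is to show that any two weight functions on the finite ring $R$ are equivalent up to positive multiplicative constants, and that this equivalence propagates from $R$ to $R^n$, to the minimum weight of a code, and finally to its relative minimum distance. The rate ${\rm R}(C) = \log_q|C|/(n\ell)$ does not involve any weight, so only the relative minimum distance $\Delta(C)$ needs to be controlled.

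First, I would fix the standing convention that a weight function ${\rm w}$ on $R$ satisfies ${\rm w}(0)=0$ and ${\rm w}(a)>0$ for every nonzero $a\in R$, and that it is extended to $R^n$ additively by ${\rm w}(a_0,\ldots,a_{n-1})=\sum_{i=0}^{n-1}{\rm w}(a_i)$ (this matches the Hamming and homogeneous cases mentioned in the paragraph just above). Since $R$ is finite and ${\rm w},{\rm w}'$ are positive on $R\setminus\{0\}$, the quantities
\[
c_1=\min_{0\ne a\in R}\frac{{\rm w}'(a)}{{\rm w}(a)},\qquad
c_2=\max_{0\ne a\in R}\frac{{\rm w}'(a)}{{\rm w}(a)}
\]
are well-defined positive real numbers, and clearly
\[
c_1\,{\rm w}(a)\le {\rm w}'(a)\le c_2\,{\rm w}(a),\qquad\forall\,a\in R.
\]

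Next, I would lift this inequality to $R^n$ by summing the componentwise bounds, obtaining $c_1{\rm w}(v)\le{\rm w}'(v)\le c_2{\rm w}(v)$ for every $v\in R^{n}$. Taking the minimum over nonzero codewords of a linear code $C_i\subseteq R^{n_i}$ yields ${\rm d}_{{\rm w}'}(C_i)\ge c_1\,{\rm d}_{\rm w}(C_i)$, and dividing by the code length $n_i\ell$ gives $\Delta_{{\rm w}'}(C_i)\ge c_1\,\Delta_{\rm w}(C_i)$. Thus if there is a $\delta>0$ with $\Delta_{\rm w}(C_i)\ge\delta$ and ${\rm R}(C_i)\ge\delta$ for all $i$, then $\Delta_{{\rm w}'}(C_i)\ge c_1\delta$ and ${\rm R}(C_i)\ge\delta$ still hold, so $\min(c_1\delta,\delta)$ serves as the new uniform positive lower bound and the sequence is asymptotically good with respect to ${\rm w}'$ as well.

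I do not foresee a serious obstacle; the entire argument rests on the finiteness of $R$ (which forces the ratio ${\rm w}'/{\rm w}$ on $R\setminus\{0\}$ to attain a positive minimum and a finite maximum) and on the additive extension of weights to $R^n$. The only place where one should be a little careful is to confirm that the definition of weight used in the paper indeed forbids ${\rm w}(a)=0$ for $a\ne 0$; if not, one would need to restrict to weights satisfying this mild nondegeneracy, which is standard and holds for the Hamming and homogeneous weights referenced in the text.
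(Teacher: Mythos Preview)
Your proposal is correct and follows essentially the same approach as the paper: use finiteness of $R$ to obtain a positive constant $\omega$ (your $c_1$) with ${\rm w}'(a)\ge\omega\,{\rm w}(a)$ on $R$, extend additively to $R^{n_i}$, and conclude $\Delta_{{\rm w}'}(C_i)\ge\omega\,\Delta_{\rm w}(C_i)$, then take $\min\{\delta,\omega\delta\}$ as the new uniform bound. Your explicit remark that the rate is weight-independent and your caveat about nondegeneracy of the weight are both appropriate; the paper's proof tacitly assumes the latter as well.
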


\begin{proof}
Let ${\rm d}(a,b)={\rm w}(a-b)$, ${\rm d}'(a,b)={\rm w}'(a-b)$,
for $a,b\in R$.
Denote $\Delta(C_i)=\frac{{\rm d}(C_i)}{n_i\ell}$ and
$\Delta'(C_i)=\frac{{\rm d}'(C_i)}{n_i\ell}$.
Since $R$ is finite, there is a real number $\omega>0$ such that
${\rm w}'(a)\ge \omega\cdot {\rm w}(a)$ for all $a\in R$.
For any $(a_0,a_1,\cdots,a_{n_i-1})\in R^{n_i}$, we have
$$\textstyle
 {\rm w}'(a_0,a_1,\cdots,a_{n_i\!-\!1})=\sum_{j=0}^{n_i\!-\!1} {\rm w}'(a_j)
 \ge \sum_{j=0}^{n_i\!-\!1} \omega\cdot {\rm w}(a_j
 )=\omega\cdot {\rm w}(a_0,a_1,\cdots,a_{n_i\!-\!1}).
$$
And, for all ${\bf a}=(a_1,\cdots, a_{n_i})$,
${\bf b}=(b_1,\cdots,b_{n_i})\in R^{n_i}$,
$$\textstyle
{\rm d}'({\bf a,b})=\sum_{j=0}^{n_i\!-\!1} {\rm w}'(a_j-b_j)
\ge \sum_{j=0}^{n_i\!-\!1}\omega\cdot {\rm w}(a_j-b_j)
 =\omega\cdot {\rm d}({\bf a,b}).
$$
Thus ${\rm d}'(C_i)\ge\omega\cdot {\rm d}(C_i)$, hence
$\Delta'(C_i)\ge\omega\cdot\Delta(C_i)$.
So, if $\Delta(C_i)\ge\delta$ for all $i=1,2,\cdots$,
then $\Delta'(C_i)\ge\omega\delta$ for all $i=1,2,\cdots$.
Take $\delta'=\min\{\delta, \omega\delta\}$. Then
$\Delta'(C_i)\ge\delta'$ for all $i=1,2,\cdots$.
\end{proof}

\begin{remark}\label{rem weight}\rm
Lemma \ref{indep of weight} shows that the asymptotic goodness of
a code sequence (or, of a class of codes) is independent of
the choice of the weight functions, though in Definition \ref{goodness}
we have to specify a weight function.
Similarly to Lemma \ref{indep of weight},
it can be also proved that the asymptotic goodness is independent of
the choice of the rates.
In the following, therefore, by ``${\rm w}$''
we always denote the Hamming weight; and, 
the minimum distance and the rate of a code $C$ are defined
as that before Definition \ref{goodness}.
\end{remark}

Let $\la\in R^\times$. An $R$-submodule $C\subseteq R^n$ is called a
{\em $\la$-constacyclic code} if
\begin{equation}\label{c cyclic}
(c_0,c_1,\cdots,c_{n-1})\in C \implies
(\la c_{n-1},c_0,\cdots,c_{n-2})\in C.
\end{equation}
Any $\la$-constacyclic code $C\subseteq R^n$ is identified with
an ideal of $R[X]\big/\langle X^n-\la\rangle$, and vice versa;
where $R[X]$ denotes the polynomial ring over $R$ and
$\langle X^n-\la\rangle$ denotes the ideal generated by $X^n-\la$.

Assume that $R'$ is also a finite chain ring
with radical $R'\pi'$ of nilpotency index $\ell'$.
If there is an epimorphism $\rho:R\to R'$,
then the residue field $R'/R'\pi'\cong R/R\pi=F$, and $\ell'\le\ell$.
But the converse is not true in general.
As far as we know, once $R$ and $R'$ are both Galois rings, or both $F$-algebras
($F$ is the residue field), then the epimorphism $\rho:R\to R'$ exist if and only if
$R'/R'\pi'\cong R/R\pi$ and $\ell'\le\ell$, see \cite[Remark 2.5]{F21}.
In general case, the equivalence no longer holds.
It is still an open question how to classify the finite chain rings,
see~\cite{CL}, \cite{Hou01}.
Thus, the assumption in the beginning of Section~3 of \cite{BFT16}
 is invalid in general.
However, the results of~\cite{BFT16} are still valid and interesting provided
such an epimorphism exists.
Inspired by \cite{BFT16}, we introduce a kind of codes which is
more extensive than the kind of $R'R$-linear cyclic codes defined in \cite{BFT16}.

\begin{definition}\label{R'R-}\rm
Let $R$, $R'$ be finite chain rings as above. Assume that
\begin{itemize}
\item \vskip-7pt
 there is an epimorphism $\rho$: $R\to R'$,
hence $\ol R\cong\ol{R'}$ and $\ell'\le\ell$;
\\
in the following we identify $\ol R=\ol{R'}:=F$, and let $|F|=q$;
\item \vskip-5pt
$\la\in R^\times$, ${\rm ord}_{F^\times}\!(\ol\la)=t$; and
   $\la'=\rho(\la)\in R'^\times$, hence $\ol{\la'}=\ol\la$;
\item \vskip-5pt
integers $\alpha, \al'>0$, 
$\al'\equiv\al~({\rm mod}~t)$ and $\gcd(\al,t)=1$.
\end{itemize}
\vskip-7pt
The epimorphism $\rho:R\to R'$ induces
an epimorphism $\rho: R[X]\to R'[X]$, $\sum_i a_i X^i \mapsto \sum_i \rho(a_i)X^i$.
So, for any integer~$n>0$, both
$R[X]\big/\langle X^n-\la\rangle$ and $R'[X]\big/\langle X^n-\la'\rangle$
are $R[X]$-modules.
Any $R[X]$-submodule $C$ of the $R[X]$-module
$$
 \big( R'[X]\big/\langle X^{\al'n}-\lambda' \rangle\big) \times
  \big( R[X]\big/\langle X^{\al n}-\lambda \rangle\big)
$$
 is said to be an {\em $(R',R)$-linear $(\la',\la)$-constacyclic codes},
  the pair $(\al' n, \al n)$ is called the {\em cycle length} of $C$,
 the fraction $\al'/\al$ is called the {\em ratio} of $C$.
Note that, by Remark \ref{rem weight}, the code length of $C$ is
$\al' n\ell'+\al n\ell$, hence the relative minimum distance
$\Delta(C)=\frac{{\rm w}(C)}{\al' n\ell'+\al n\ell}$,
 and the rate ${\rm R}(C)=\frac{\log_q|C|}{\al' n\ell'+\al n\ell}$.

If $R'=R=F$ is a finite field, then $\la'=\la$ and, following notation of \cite{AH},
we also call the $(F,F)$-linear $(\la,\la)$-constacyclic codes of ratio $\al'/\al$
by {\em double $\la$-twisted codes of ratio $\al'/\al$} over~$F$.
\end{definition}

Similarly to Eq.\eqref{c cyclic},
any $(R',R)$-linear $(\la',\la)$-constacyclic code $C$ defined as above
is identified with an $R$-submodule $C\subseteq R'^{\al'\!n}\times R^{\al n}$
satisfying that
\begin{equation}\label{ratio cyclic}
\begin{array}{l}
\quad (c'_0,c'_1,\cdots,c'_{\al' n-1},~c_0,c_1,\cdots,c_{\al n-1})\in C\\[5pt]
 \implies   (\la' c'_{\al'n-1}, c'_0,\cdots,c'_{\al' n-2},
   ~\la c_{\al n-1},c_0,\cdots,c_{\al n-2})\in C;
\end{array}
\end{equation}
and vice versa.

As mentioned in Section \ref{introduction}, we'll prove that
$(R',R)$-linear $(\la',\la)$-constacyclic codes of ratio~$\al'/\al$
are asymptotically good,
and the key step is to prove it for the case that $R'=R=F$
is a finite field, i.e., Theorem \ref{c good sequence over F} below.
Thus we need a few preliminaries about the codes over finite fields.

Let $F$ be a finite field with $|F|=q=p^r$ as above in Definition \ref{R'R-}.
Let $I=\{1,\cdots,n\}$ be an index set, and $F^I=F^n$.
For any subset $I'\subseteq I$, $I'=\{i_1,\cdots,i_k\}$, $1\le i_1<\cdots<i_k\le n$,
denote $F^{I'}=\{(a_{i_1},\cdots,a_{i_k})\mid a_{i_j}\in F\}$; hence
we have the projection $\rho_{I'}:F^I\to F^{I'}$,
$\rho_{I'}(a_1,\cdots,a_n)= (a_{i_1},\cdots,a_{i_k})$.

\begin{definition}\label{d balanced}\rm
A code $C\subseteq F^I$ is said to be {\em balanced} if
there are subsets (repetition allowed) $I_1,\cdots, I_m\subseteq I$
and a positive integer $t$ such that
\begin{itemize}
\item[(1)]
for each $I_j$, $1\le j\le m$, the projection $\rho_{I_j}$ induces
a bijection from $C$ onto $F^{I_j}$
(equivalently,  $q^{|I_j|}=|C|=|\rho_{I_j}(C)|$ for $j=1,\cdots,m$);
\item[(2)]
for each $i\in I$, there are exact $t$ indexes $1\le j_1<\cdots<j_t\le m$
such that $i\in I_{j_h}$ for $h=1,\cdots,t$.
\end{itemize}
\end{definition}

The following function $h_q(\delta)$ is called the {\em $q$-ary entropy}:
\begin{equation}
 h_q(\delta)=\delta\log_q(q-1)-\delta\log_q \delta
   - (1-\delta)\log_q(1-\delta), \quad \delta\in[0,1-q^{-1}],
\end{equation}
where $0\log_q0=0$ as a convention.
The function $h_q(x)$ is strictly increasing and concave
in the interval $[0,1-q^{-1}]$ with $h_q(0)=0$ and $h_q(1-q^{-1})=1$.

The following result was proved in \cite{M74},
\cite{P85} and \cite{S86} for the binary case,
and proved in \cite[Corollary 3.4]{FL15} for general case.

\begin{lemma}\label{<=delta}
Assume that $0<\delta<1-\frac{1}{q}$.
Let $C\subseteq F^n$ be a balanced code. Set $k=\log_q|C|$.
Denote $C^{\le\delta}=\big\{c\in C\,\big|\, \frac{{\rm w}(c)}{n}\le\delta\big\}$.
Then $|C^{\le\delta}|\le q^{k h_q(\delta)}$.
\end{lemma}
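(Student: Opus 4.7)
The plan is to combine the balanced hypothesis with a H\"older--Chernoff estimate on the weight enumerator $W_C(y):=\sum_{c\in C}y^{{\rm w}(c)}$. First, condition~(1) of Definition~\ref{d balanced} forces $|I_j|=k$ for every $j$, and double-counting the pairs $(i,j)$ with $i\in I_j$ via condition~(2) yields $mk=tn$ together with the pointwise identity $\sum_{j=1}^{m}w_j(c)=t\cdot{\rm w}(c)$ for every $c\in C$, where $w_j(c):={\rm w}(\rho_{I_j}(c))$. Since $\rho_{I_j}:C\to F^{I_j}$ is a bijection, the local generating function equals that of the full space:
\[
\sum_{c\in C}x^{w_j(c)}=(1+(q-1)x)^{k}.
\]

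Next I would apply H\"older's inequality in the weighted form $\sum_{i}\prod_{j}a_{ij}^{1/m}\le\prod_{j}\bigl(\sum_{i}a_{ij}\bigr)^{1/m}$ with $a_{cj}=x^{w_j(c)}$:
\[
\sum_{c\in C}x^{\frac{1}{m}\sum_{j}w_j(c)}\;\le\;\prod_{j=1}^{m}\Bigl(\sum_{c\in C}x^{w_j(c)}\Bigr)^{\!1/m}=(1+(q-1)x)^{k}.
\]
Using the identity $\sum_{j}w_j(c)=t\,{\rm w}(c)$ and the substitution $y=x^{t/m}$, the left side becomes $W_C(y)$ and the right side becomes $(1+(q-1)y^{m/t})^{k}$, yielding the key estimate
\[
W_C(y)\;\le\;\bigl(1+(q-1)y^{m/t}\bigr)^{k}.
\]

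For the Chernoff step, pick any $y\in(0,1)$; then $y^{{\rm w}(c)}\ge y^{\delta n}$ for every $c\in C^{\le\delta}$, hence $|C^{\le\delta}|\,y^{\delta n}\le W_C(y)$. Combining with the previous inequality and setting $z=y^{m/t}$ (so that $y^{-\delta n}=z^{-\delta k}$, using $nt/m=k$) gives
\[
|C^{\le\delta}|\;\le\;\left(\frac{1+(q-1)z}{z^{\delta}}\right)^{\!k}.
\]
The last step is a one-variable calculus exercise: the logarithmic derivative of the bracketed quantity vanishes at $z^{*}=\delta/\bigl((q-1)(1-\delta)\bigr)$, which lies in $(0,1)$ exactly under the hypothesis $\delta<1-q^{-1}$; substituting $z^{*}$ and simplifying reduces the bracket to $q^{h_q(\delta)}$, giving the desired bound $|C^{\le\delta}|\le q^{kh_q(\delta)}$.

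The only delicate step is the H\"older inequality: one must match equal weights $1/m$ with the balanced identity $\sum_j w_j(c)=t\,{\rm w}(c)$ so that the single sum over $C$ converts into a product of weight enumerators of the full spaces $F^{I_j}$, and thereby avoid a spurious multiplicative factor $m$ that would come out of a naive pigeonhole. Once the estimate $W_C(y)\le(1+(q-1)y^{m/t})^{k}$ is in hand, the Chernoff truncation and the calculus optimization proceed routinely.
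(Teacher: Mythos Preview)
The paper does not give its own proof of this lemma; it simply cites the result (Massey, Piret, Shparlinsky for the binary case, and \cite[Corollary~3.4]{FL15} for general $q$). Your argument is correct and is in fact the standard proof one finds in those references. The double-counting $mk=tn$ and the pointwise identity $\sum_{j}w_j(c)=t\cdot{\rm w}(c)$ are exactly the content of the balanced hypothesis; the H\"older step with equal exponents $1/m$ is legitimate because all $a_{cj}=x^{w_j(c)}$ are nonnegative; and the Chernoff optimization at $z^{*}=\delta/((q-1)(1-\delta))$ indeed evaluates to $q^{h_q(\delta)}$. One cosmetic remark: your substitution chain $y=x^{t/m}$ followed by $z=y^{m/t}$ just returns $z=x$, so you could skip the intermediate variable $y$ and bound $|C^{\le\delta}|$ directly by $\bigl((1+(q-1)x)/x^{\delta}\bigr)^{k}$ for $x\in(0,1)$ via $|C^{\le\delta}|\,x^{t\delta n/m}\le\sum_{c}x^{t\,{\rm w}(c)/m}$, then optimize in $x$.
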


The above lemma is suitable to study the asymptotic properties of group codes
such as quasi-abelian codes, dihedral codes etc., e.g., \cite{BM, BW, FL15, FL20}.
To apply it to the investigation of the constacyclic codes,
we need the following lemma.

\begin{lemma}\label{c c balanced}
 Let $C$ 
 be a $\la$-constacyclic code over $F$ of length $n$.
 Then $C$ is a balanced code.
\end{lemma}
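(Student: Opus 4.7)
The plan is to exhibit an explicit family of information sets that realizes the balanced structure from Definition~\ref{d balanced}. Since $C$ is an $F$-linear subspace of $F^n$, standard linear algebra (Gaussian elimination on a generator matrix) gives at least one \emph{information set} $I_1\subseteq\{0,1,\ldots,n-1\}$ with $|I_1|=k:=\dim_F C$ such that the projection $\rho_{I_1}:C\to F^{I_1}$ is a bijection. The idea is then to generate further information sets by rotating $I_1$, using the $\la$-constacyclic shift as the bridge.

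Let $\tau:C\to C$ denote the $\la$-constacyclic shift from \eqref{c cyclic}, which is an $F$-linear bijection of $C$ onto itself, and let $\sigma$ be the permutation $i\mapsto i+1\pmod n$ on the index set $I=\{0,1,\ldots,n-1\}$. For each $j=1,\ldots,n$ I will set $I_j:=\sigma^{j-1}(I_1)$. To check condition~(1), I will verify that $\rho_{I_j}$ is a bijection $C\to F^{I_j}$. The key observation is that, for any $c\in C$, the entries of $\tau(c)$ at positions in $\sigma(I_1)$ coincide, up to a permutation and possibly a single factor $\la$ at position $0$, with the entries of $c$ at positions in $I_1$; in other words, there is a bijection $\phi:F^{\sigma(I_1)}\to F^{I_1}$ (a coordinate permutation together with multiplication by $\la^{-1}$ at the appropriate coordinate) such that $\phi\circ\rho_{\sigma(I_1)}\circ\tau=\rho_{I_1}$. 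Since $\rho_{I_1}$ and $\tau$ are both bijections, so is $\rho_{\sigma(I_1)}$. Iterating $\tau$ gives the same conclusion for every $I_j=\sigma^{j-1}(I_1)$.

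For condition~(2), I will count how many $j\in\{1,\ldots,n\}$ satisfy $i\in I_j$ for a fixed $i\in I$. The condition is $\sigma^{-(j-1)}(i)\in I_1$, i.e.\ $i-(j-1)\pmod n\in I_1$. As $j-1$ runs through $\{0,1,\ldots,n-1\}$, the residue $i-(j-1)\pmod n$ also runs through all of $I$ exactly once, so it lands in $I_1$ exactly $|I_1|=k$ times. Hence each index $i$ is covered by exactly $t:=k$ of the sets $I_1,\ldots,I_n$, which confirms the balanced property with parameters $m=n$ and $t=k$.

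The only subtle point is handling the scalar $\la$ appearing in the $\la$-constacyclic shift correctly when one of the boundary indices $n-1$ sits in $I_1$; I expect this to be the most notation-heavy part, but it is harmless because multiplication by $\la\in F^\times$ is a bijection of $F$ and the argument only needs bijectivity of $\rho_{I_j}$, not a precise identification of images. Everything else is bookkeeping, and the structure above immediately yields that $C$ is balanced.
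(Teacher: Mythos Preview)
Your proposal is correct and follows essentially the same approach as the paper: both arguments take an information set $I_1$, generate the family $I_1,\sigma(I_1),\ldots,\sigma^{n-1}(I_1)$ by cyclic rotation, verify that each rotate is again an information set via the $\la$-constacyclic shift (the paper does this with explicit matrices $\Theta_\la$, $\Theta_k^{-1}$, $D_k$, while you use the abstract bijection $\phi$), and then check uniform coverage of indices. The only cosmetic difference is that the paper invokes the separate group-action Lemma~\ref{permutation} for condition~(2), whereas your direct count that $i-(j-1)\pmod n$ sweeps $I$ exactly once is the same observation specialized to the cyclic case.
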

\begin{proof} Let $\dim C=k$,
 $I=\{0,1,\cdots,n-1\}$.
Let $I_*=\{i_1,\cdots,i_k\}\subseteq I$ with $0\le i_1<\cdots<i_k<n$ such that
$|I_*|=k=\dim\rho_{I_*}(C)$.
Set
$$
\Theta_{\la}=\begin{pmatrix}0&1\\ &\ddots&\ddots\\
 &&0&1\\ \la&&&0 \end{pmatrix}_{n\times n},
~{\rm so}~
{\bf c}\Theta_\la=(\la c_{n-1},c_0,\cdots,c_{n-2})\in C,
~\forall~{\bf c}\in C.
$$
Let $\theta=(0,1,\cdots,n-1)$ be a cycle permutation,
so $\theta^{-1}I_* =\{i_1-1,\,\cdots,\,i_k-1\}$,
where $i_1-1$ should be replaced by $n-1$ if $i_1=0$.
Further, set
$$
 \Theta_{k}^{-1}=\begin{pmatrix}0&&&1\\ 1&\ddots \\
 &\ddots&0\\ &&1&0 \end{pmatrix}_{k\times k}, ~~~
D_k=\begin{pmatrix}1&&&\\ &\ddots \\
 &&1\\ &&&\la^{-1} \end{pmatrix}_{k\times k}.
$$
It is checked directly that:\\
\indent
--- If $i_1>0$, then
$\rho_{\theta^{-1} I_*}({\bf c})=\rho_{I_*}({\bf c}\Theta_\la)\Theta_k^{-1}$;\\
\indent
--- otherwise, $i_1=0$, and
$ \rho_{\theta^{-1} I_*}({\bf c})=
\rho_{I_*}({\bf c}\Theta_\la)\Theta_k^{-1}{D_k}$.\\
In any case,
$\rho_{\theta^{-1}I_*}(C)=\rho_{I_*}(C)\Theta_k^{-1}=F^{\theta^{-1}I_*}$.
Now we take an information set $I_1\subseteq I$ of $C$,
i.e., $|I_1|=k=\dim\rho_{I_1}(C)$.
Let $I_{j+1}=\theta^{-j}I_1$, $j=0,1,\cdots,n-1$.
Then $I_1,\cdots,I_n$ satisfy Definition \ref{d balanced}(1).
The cyclic permutation group $\langle\theta\rangle=\langle\theta^{-1}\rangle$
acts on $I$ transitively (in fact, regularly).
By Lemma \ref{permutation} below,
$I_1,\cdots,I_n$ satisfy Definition \ref{d balanced}(2).
In conclusion, $C$ is a balanced code.
\end{proof}

\begin{lemma}\label{permutation}
Let a finite group $G$ act transitively on a finite set $X$. Let $Y\subseteq X$,
 $x_0\in X$. Then
$|\{g\in G\mid x_0\in gY\}|=\frac{|G|\cdot|Y|}{|X|}$.
\end{lemma}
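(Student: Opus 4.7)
The plan is to prove this by a standard double-counting argument applied to the incidence set
\[
S \;=\; \{(g,y)\in G\times Y \mid g\cdot y = x_0\}.
\]

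First, I would count $|S|$ by fixing $g\in G$ and asking how many $y\in Y$ satisfy $g\cdot y=x_0$. Since the group action is a function, the equation $g\cdot y=x_0$ has the unique solution $y=g^{-1}\cdot x_0$, so the inner count is $1$ if $g^{-1}\cdot x_0\in Y$ (equivalently $x_0\in gY$) and $0$ otherwise. This gives $|S|=|\{g\in G\mid x_0\in gY\}|$, which is exactly the quantity the lemma is asking about.

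Next, I would count $|S|$ by fixing $y\in Y$ and asking how many $g\in G$ satisfy $g\cdot y=x_0$. Transitivity guarantees the existence of at least one such $g$, call it $g_y$; any other solution differs from $g_y$ by an element of the stabilizer $\mathrm{Stab}_G(y)$, so the set of such $g$ is the coset $g_y\cdot\mathrm{Stab}_G(y)$. By the orbit--stabilizer theorem applied to the transitive action, $|\mathrm{Stab}_G(y)|=|G|/|X|$, independent of the choice of $y$. Summing over $y\in Y$ yields $|S|=|Y|\cdot |G|/|X|$.

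Equating the two counts gives $|\{g\in G\mid x_0\in gY\}|=|G|\cdot|Y|/|X|$, as required. There is no real obstacle here; the only thing to be careful about is the direction of the action when rewriting $g\cdot y=x_0$ as $x_0\in gY$, and the uniform application of orbit--stabilizer (which requires only transitivity, not freeness, of the action).
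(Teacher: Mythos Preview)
Your proof is correct and is essentially the same double-counting argument as the paper's: the paper also counts, for each $y\in Y$, the number of $g\in G$ with $gy=x_0$ (using the stabilizer $G_{x_0}$ rather than $\mathrm{Stab}_G(y)$, but either has size $|G|/|X|$), and implicitly uses your first count to conclude these sets are disjoint before summing.
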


\begin{proof}
For any $y\in Y$, there is a $g\in G$ such that $x_0=gy$.
And, for $g'\in G$, $g'y=x_0=gy$ if and only if $g^{-1}x_0=y=g'^{-1}x_0$,
if and only if $g'g^{-1}\in G_{x_0}$,
where $G_{x_0}=\{h\in G\,|\,hx_0=x_0\}$. Thus,
$|\{g\in G\,|\,x_0=gy\}|=|G_{x_0}|=\frac{|G|}{|X|}$.
Therefore, $|\{g\in G\mid x_0\in gY\}|=|Y|\cdot\frac{|G|}{|X|}$.
\end{proof}

\section{Double $\la$-twisted codes over finite fields}
\label{double twisted over fields}
\begin{remark}\label{F-notation}\rm
In this section, we always take the following notation.
\begin{itemize}
\item \vskip-5pt
$F$ is a finite field, the cardinality $|F|=q$;
  $\la\in F^\times$, ${\rm ord}_{F^\times}\!(\la)=t$.
\item \vskip-5pt
$\R=F[X]/\langle X^{n}-1\rangle$, $n$ is a positive integer
with $\gcd(n,qt)=1$.
\item  \vskip-5pt
$\al$, $\al'>0$ are integers,
$\al'\equiv\al~({\rm mod}~t)$ and $\gcd(\al, t)=1$;\\
set $\al''=\min\{\al,\al'\}$.
\item  \vskip-5pt
$\R_{\la,\al}=F[X]/\langle X^{\al n}-\la\rangle$,~
$\R_{\la,\al'}=F[X]/\langle X^{\al' n}-\la\rangle$; ~ then
\\
$\R_{\la,\al'}\times \R_{\la,\al}
 =\{(a',a)\mid a'\in \R_{\la,\al'}, a\in\R_{\la,\al}\}$.
\item  \vskip-5pt
 $\delta\in(0,1-q^{-1})$.
\end{itemize}
 \vskip-5pt
The rings ($F$-algebras) $\R$, $\R_{\la,\al}$ and $\R_{\la,\al'}$
can be viewed as $F[X]$-modules.
Hence $\R_{\la,\al'}\times \R_{\la,\al}$ is an $F[X]$-module.
By Definition \ref{R'R-}, any $F[X]$-submodule
$C\subseteq\R_{\la,\al'}\times \R_{\la,\al}$
is called an $(F,F)$-linear $(\la,\la)$-constacyclic code of ratio $\al'/\al$.
As pointed out in Section \ref{introduction}, 
the code $C$ is also called a double $\la$-twisted code over $F$ of ratio  $\al'/\al$.
The code length of $C$ is $\al' n+\al n$, and the information length of $C$
 is just the dimension $\dim C$, see Remark \ref{rem weight}.
\end{remark}

In Subsection 3.1 we relate an ideal of $\R$ to an ideal of $\R_{\la,\al}$
(and of  $\R_{\la,\al'}$).
In Subsection 3.2 we construct and study
a kind of random double $\la$-twisted codes of ratio $\al'/\al$.
In Subsection 3.3 we prove the asymptotic goodness of the
double $\la$-twisted codes.

\subsection{About $\R$, $\R_{\la,\al}$ and $\R_{\la,\al'}$}

Because $\gcd(n,q)=1$, $X^n-1$ is a product of 
pairwise coprime monic irreducible $F$-polynomials 
$\phi_i(X)$ with degree $\deg\phi_i(X)=d_i$ as follows
$$
X^n-1=\phi_0(X)\phi_1(X)\cdots\phi_m(X),
$$
where we appoint that
\begin{align}\label{phi_0}
\phi_0(X)=X-1,\quad
\wh\phi_0(X)=\phi_1(X)\cdots\phi_m(X)=X^{n-1}+\cdots+X+1.
\end{align}
So $X^n-1=\phi_0(X)\wh\phi_0(X)$.
By Chinese Remainder Theorem,
\begin{equation}\label{decom R}
\R=F_0\oplus F_1\oplus\cdots\oplus F_m,
\end{equation}
where $F_i\cong F[X]/\langle \phi_i(X)\rangle$ are finite fields
and $\dim_F F_i=\deg\phi_i(X)=d_i$ for $i=0,1,\cdots,m$.
Of course, $d_0=1$.

\begin{remark}\label{omega(n)>}\rm
Let notation be as above. We denote
\begin{equation}\label{def mu(n)}
\mu(n)=\min\{d_1,\cdots,d_m\}.
\end{equation}
By \cite[Lemma 2.6]{BM} (for binary case) and 
\cite[Lemma 3.6]{FL20} (for general case),
there is a sequence $n_1,n_2,\cdots$ of positive integers $n_i$ coprime
to~$qt$ such that $\lim\limits_{i\to\infty}\frac{\log_q n_i}{\mu(n_i)}=0$.
Thus, in the following we further assume that $\mu(n)>\log_q n$.
\end{remark}

By Eq.\eqref{decom R}, any ideal  ${\cal J}$ of $\R$
is a direct sum of some of $F_0,F_1,\cdots,F_m$,
hence ${\cal J}=\R e_{\!\cal J}$ for an idempotent $e_{\!\cal J}$,
and $b e_{\!\cal J}=b$ for all $b\in{\cal J}$;
so ${\cal J}$ is a ring with identity $e_{\!\cal J}$.
And, for all $f(X)\in F[X]$, we have $f(X) e_{\!\cal J}\in{\cal J}$ and
\begin{equation}\label{over J}
f(X)b(X)=f(X)e_{\!\cal J}b(X),~~~~\forall~b(X)\in{\cal J};
\end{equation}
that is, the $F[X]$-module structure of ${\cal J}$ is reduced to the
${\cal J}$-module structure of ${\cal J}$ itself.


We relate now $\R$ to a part of $\R_{\la,\al}$ (and $\R_{\la,\al'}$).
We start with a remark.
\begin{remark}\label{1/alpha} \rm
In the multiplicative group $F^\times$, the element
$\la$ generates a cyclic group 
$\langle\la\rangle=\{\la^j\,|\, j\in\Z_t\}$; i.e.,
the elements of $\langle \la\rangle$ are 1-1
corresponding to the elements of $\Z_t$.
By assumption, $\al, n\in\Z_t^\times$, so the inverses
$\frac{1}{\al}, \frac{1}{n}\in\Z_t^\times$ exist,
and $\la^{\frac{1}{\al}}$, $\la^{\frac{1}{n}}$,
$\la^{\frac{1}{\al n}}$, etc., make sense.
Note that, since $\al'\equiv\al~({\rm mod}~t)$,
$\frac{1}{\al'}=\frac{1}{\al}$ in $\Z_t^\times$,
hence $\la^{\frac{1}{\al'}}=\la^{\frac{1}{\al}}$.
\end{remark}

Then we have the decomposition:
\begin{align*}
X^{\al n}\!-\!\la&=(X^n)^\al-(\la^{\frac{1}{\al}})^\al
=(X^n-\la^{\frac{1}{\al}})\cdot\psi_{\la,\al}(X),
\end{align*}
where
\begin{equation}\label{psi}
 \psi_{\la,\al}(X)=(X^n)^{\al-1}
 +(X^n)^{\al-2}\la^{\frac{1}{\al}}
 +\cdots+(\la^{\frac{1}{\al}})^{\al-1}.
\end{equation}
Note that, in the special case ``$\al=1$'',
$\psi_{\la,1}=1$.
By Eq.\eqref{phi_0},
\begin{align*}
X^n-\la^{\frac{1}{\al}}
&
=\la^{\frac{1}{\al}}
\big((X/\la^{\frac{1}{\al n}})^n-1\big)
=\la^{\frac{1}{\al}}\phi_0(X/\la^{\frac{1}{\al n}})
 \wh\phi_0(X/\la^{\frac{1}{\al n}}).
\end{align*}
Setting $\psi_{\la,\al}^+(X)
=\phi_0\big(X/\la^{\frac{1}{\al n}}\big)\psi_{\la,\al}(X)$, we get
\begin{equation}\label{psi^+}
X^{\al n}\!-\!\la=
\la^{\frac{1}{\al}}\phi_0(X/\la^{\frac{1}{\al n}})
 \wh\phi_0(X/\la^{\frac{1}{\al n}})\psi_{\la,\al}(X)
=\la^{\frac{1}{\al}}\wh\phi_0(X/\la^{\frac{1}{\al n}})
 \psi_{\la,\al}^+(X).
\end{equation}
In the special case ``$\la=1$ and $\al =1$'',
$\psi_{1,1}^+(X)=\phi_0(X)$ since $\psi_{1,1}=1$.

Replacing $\al$ by $\al'$,
we get $\psi_{\la,\al'}(X)$ and $\psi_{\la,\al'}^+(X)$
similarly to Eq.\eqref{psi} and Eq.\eqref{psi^+}, respectively.

We are concerned with the following ideals
of $\R$, $\R_{\la,\al}$ and $\R_{\la,\al'}$:
\begin{equation}\label{I,I_la...}
\I=\R\phi_0(X),~~~~
\I_{\la,\al}=\R_{\la,\al}\psi_{\la,\al}^+(X), ~~~~
 \I_{\la,\al'}=\R_{\la,\al'}\psi_{\la,\al'}^+(X),
\end{equation}
which are all $F[X]$-modules,
as $\R$, $\R_{\la,\al}$ and $\R_{\la,\al'}$
are $F[X]$-modules.
By Eq.\eqref{decom R} and Eq.\eqref{def mu(n)},
\begin{equation}\label{I=}
\I=F_1\oplus\cdots\oplus F_m,\quad
\mbox{each $F_i$ is a field with $d_i=\dim_F F_i\ge\mu(n)$;}
\end{equation}
and any $F[X]$-submodule of $\I$ is a direct sum of some of
$F_1,\cdots,F_m$. Then we relate $\I$ to $\I_{\la,\al}$ and $\I_{\la,\al'}$
by the following concept.

\begin{remark}\label{sigma-iso}\rm
Assume that $M,M'$ are $F[X]$-modules, and $\sigma: F[X]\to F[X]$ is an
$F$-algebra automorphism
(i.e., $\sigma$ is both an $F$-linear isomorphism and a ring isomorphism).
If a map $\tau: M\to M'$ preserves additions and satisfies that:
$\tau(am)=\sigma(a)\tau(m)$, $\forall$ $a\in F[X]$, $\forall$ $m\in M$,
then we say that $\tau$ is a {\em $\sigma$-$F[X]$-homomorphism}.
Further, if a $\sigma$-$F[X]$-homomorphism $\tau$ is bijective,
then we say that $\tau$ is a {\em $\sigma$-$F[X]$-isomorphism}.
Note that a $\sigma$-$F[X]$-isomorphism $\tau: M\to M'$ preserves
all the submodule structures, including the dimensions of submodules.
\end{remark}

The following is clearly an $F$-algebra automorphism:
\begin{equation}\label{sigma_la}
\sigma_{\la}:~ F[X] \longrightarrow F[X],~~
 f(X)\longmapsto f\big(X/\la^{\frac{1}{\al n}}\big),
\end{equation}
which is defined for both $\al$ and $\al'$ because
 $\frac{1}{\al'}=\frac{1}{\al}$ in $\Z_t^\times$, see Remark \ref{1/alpha}.
In the special case ``$\la=1$'' (i.e., cyclic case),
$\sigma_{1}={\rm id}_{F[X]}$
is the identity automorphism of $F[X]$.

\begin{lemma}\label{tau_la,al}
The following is a well-defined
$\sigma_{\la}$-$F[X]$-isomorphism:
$$
\tau_{\la,\al}: ~~\I\longrightarrow\I_{\la,\al},~~
 f(X)\longmapsto f\big(X/\la^{\frac{1}{\al n}}\big)\psi_{\la,\al}(X).
$$
\end{lemma}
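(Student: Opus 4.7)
My plan is to verify four things in sequence: (i) the formula $f(X)\mapsto f(X/\la^{\frac{1}{\al n}})\psi_{\la,\al}(X)$ is independent of the polynomial representative of a class in $\R$; (ii) the image actually lies in $\I_{\la,\al}$, not just in $\R_{\la,\al}$; (iii) the map intertwines scalar multiplication by $F[X]$ through $\sigma_{\la}$; and (iv) the map is bijective. The single algebraic identity driving (i) and (ii) is $X^{\al n}-\la=(X^n-\la^{\frac{1}{\al}})\psi_{\la,\al}(X)$ combined with $X^n-\la^{\frac{1}{\al}}=\la^{\frac{1}{\al}}\bigl((X/\la^{\frac{1}{\al n}})^n-1\bigr)$; both are immediate from Eq.\eqref{psi} and Remark~\ref{1/alpha}.

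For (i), if $f_1-f_2=h(X)(X^n-1)$ in $F[X]$, then the substitution $X\mapsto X/\la^{\frac{1}{\al n}}$ followed by multiplication by $\psi_{\la,\al}(X)$ produces $\la^{-\frac{1}{\al}}h(X/\la^{\frac{1}{\al n}})(X^{\al n}-\la)$, which vanishes in $\R_{\la,\al}$. For (ii), a representative of $f\in\I$ has the form $u(X)\phi_0(X)+w(X)(X^n-1)$ with $u,w\in F[X]$; the $w$-part is killed by the argument in (i), and the $u$-part becomes $u(X/\la^{\frac{1}{\al n}})\phi_0(X/\la^{\frac{1}{\al n}})\psi_{\la,\al}(X)=u(X/\la^{\frac{1}{\al n}})\psi_{\la,\al}^+(X)$ by Eq.\eqref{psi^+}, which lies in $\R_{\la,\al}\psi_{\la,\al}^+(X)=\I_{\la,\al}$.

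Step (iii) is immediate: additivity is obvious, and for $a\in F[X]$ the definition of $\sigma_{\la}$ in Eq.\eqref{sigma_la} gives $\tau_{\la,\al}(af)=a(X/\la^{\frac{1}{\al n}})f(X/\la^{\frac{1}{\al n}})\psi_{\la,\al}(X)=\sigma_{\la}(a)\tau_{\la,\al}(f)$.

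For (iv), I would compare $F$-dimensions first: $\dim_F\I=n-1$ by Eq.\eqref{I=}, while $\deg\psi_{\la,\al}^+(X)=1+n(\al-1)$ yields $\dim_F\I_{\la,\al}=\al n-(n\al-n+1)=n-1$. Hence it suffices to exhibit a preimage of each element of $\I_{\la,\al}$: given $g(X)\psi_{\la,\al}^+(X)\in\I_{\la,\al}$, I would set $h(X)=g(\la^{\frac{1}{\al n}}X)\phi_0(X)\in\R\phi_0(X)=\I$ and check that $\tau_{\la,\al}(h)=g(X)\phi_0(X/\la^{\frac{1}{\al n}})\psi_{\la,\al}(X)=g(X)\psi_{\la,\al}^+(X)$. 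The main bookkeeping concern throughout is that the fractional exponents must be read as elements of $\Z_t^\times$, which is legitimate by Remark~\ref{1/alpha} since $\gcd(\al,t)=\gcd(n,t)=1$; once that convention is fixed every substitution becomes an honest identity in $F[X]$ and no deeper obstacle remains.
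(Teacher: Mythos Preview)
Your proposal is correct and follows essentially the same route as the paper's own proof: both arguments verify well-definedness via the identity $(X^n-\la^{\frac{1}{\al}})\psi_{\la,\al}(X)=X^{\al n}-\la$, check that the image lands in $\I_{\la,\al}$ by factoring a representative of $f\in\I$ through $\phi_0$ and invoking $\psi_{\la,\al}^+=\phi_0(X/\la^{\frac{1}{\al n}})\psi_{\la,\al}$, establish $\sigma_\la$-compatibility directly from the definition, exhibit the explicit preimage $g(\la^{\frac{1}{\al n}}X)\phi_0(X)$ for surjectivity, and conclude bijectivity from the dimension count $\dim_F\I=\dim_F\I_{\la,\al}=n-1$. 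The only cosmetic difference is that the paper reverses the order of your steps (i) and (ii).
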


\begin{proof}
For $f(X)\in\I=\R\phi_0(X)$, $f(X)=g(X)\phi_0(X)$ for a $g(X)\in\R$; then
$$
f\big(X/\la^{\frac{1}{\al n}}\big)\psi_{\la,\al}(X)
=g\big(X/\la^{\frac{1}{\al n}}\big)
 \phi_0\big(X/\la^{\frac{1}{\al n}}\big)\psi_{\la,\al}(X)
=g\big(X/\la^{\frac{1}{\al n}}\big)\psi^+_{\la,\al}(X);
$$
so $f\big(X/\la^{\frac{1}{\al n}}\big)\psi_{\la,\al}(X)\in\I_{\la,\al}$.
Next, assume that both $f(X),f'(X)\in F[X]$ represent one and the same element
in $\I$, then $f'(X)=f(X)+g(X)(X^n-1)$ for a $g(X)\in F[X]$, so
\begin{align*}
f'(X/\la^{\frac{1}{\al n}})\psi_{\la,\al}(X)
=f(X/\la^{\frac{1}{\al n}})\psi_{\la,\al}(X)
+g(X/\la^{\frac{1}{\al n}})((X/\la^{\frac{1}{\al n}})^n\!-\!1)\psi_{\la,\al}(X).
\end{align*}
By Eq.\eqref{psi^+}, in $\I_{\la,\al}$ we have
$((X/\la^{\frac{1}{\al n}})^n-1)\psi_{\la,\al}(X)=0$. Thus
\begin{align*}
f'(X/\la^{\frac{1}{\al n}})\psi_{\la,\al}(X)
=f(X/\la^{\frac{1}{\al n}})\psi_{\la,\al}(X),\quad
(\mbox{in $\I_{\la,\al}$.})
\end{align*}
Summarizing the above, we see that the  $\tau_{\la,\al}$ in the lemma
is a well-defined map.
Obviously, $\tau_{\la,\al}$ preserves additions.
For $f(X)\in\I$ and $g(X)\in F[X]$,
\begin{align*}
& \tau_{\la,\al}\big(g(X)f(X)\big)
=g(X/\la^{\frac{1}{\al n}})f(X/\la^{\frac{1}{\al n}})\psi_{\la,\al}(X)\\
&=g(X/\la^{\frac{1}{\al n}})\tau_{\la,\al}\big(f(X)\big)
=\sigma_{\la}\big(g(X)\big)\tau_{\la,\al}\big(f(X)\big).
\end{align*}
Thus, $\tau_{\la,\al}$ is a $\sigma_{\la}$-$F[X]$-homomorphism.
For any $g(X)\psi_{\la,\al}^+(X)\in\I_{\la,\al}$,
$$
 g(X)\psi_{\la,\al}^+(X)=g(X)\phi_{0}\big(X/\la^{\frac{1}{\al n}}\big)\psi_{\la,\al}(X).
$$
Then $g(\la^{\frac{1}{\al n}}X)\phi_{0}(X)\in\I$ and
\begin{align*}
\tau_{\la,\al}\big(g(\la^{\frac{1}{\al n}}X)\phi_{0}(X)\big)
=g(\la^{\frac{1}{\al n}}X/\la^{\frac{1}{\al n}})
\phi_{0}(X/\la^{\frac{1}{\al n}})\psi_{\la,\al}(X)
=g(X)\psi^+_{\la,\al}(X).
\end{align*}
So, $\tau_{\la,\al}$ is surjective. Finally,
by Eq.\eqref{psi^+} again,
\begin{equation}\label{dim I}
\dim_F\I=\dim_F\I_{\la,\al}=n-1.
\end{equation}
Thus, $\tau_{\la,\al}$ is a bijection.
\end{proof}

\begin{remark}\label{tau_la,be}\rm
It is clear that Lemma \ref{tau_la,al} still holds
if we replace $\al$ by $\al'$; i.e., 
the following is a well-defined $\sigma_{\la}$-$F[X]$-isomorphism
(recall that $\frac{1}{\al'}=\frac{1}{\al}$ in $\Z_t^\times$
hence $\la^{\frac{1}{\al n}}=\la^{\frac{1}{\al' n}}$):
$$\tau_{\la,\al'}:~ \I\longrightarrow\I_{\la,\al'},~~
 f(X)\longmapsto f\big(X/\la^{\frac{1}{\al n}}\big)\psi_{\la,\al'}(X).
$$
For $f(X)\in{\cal I}$, to simplify the notation, in the following
 we'll denote the image $\tau_{\la,\al}\big(f(X)\big)$ by
 $f^{\tau_{\la,\al}}(X)\in {\cal I}_{\la,\al}$,
 and denote $\tau_{\la,\al'}\big(f(X)\big)$ by
 $f^{\tau_{\la,\al'}}(X)\in {\cal I}_{\la,\al'}$.
\end{remark}

\subsection{Random double twisted code $C_{a'\!,a}$ over $F$}

Recall that $\I_{\la,\al'}\times\I_{\la,\al}$
is a $2(n-1)$-dimensional $F[X]$-submodule
of $\R_{\la,\al'}\times\R_{\la,\al}$, see Eq.\eqref{dim I}.
In the rest of this section, we view $\I_{\la,\al'}\!\times\I_{\la,\al}$ as a
probability space with equal probability for every sample.

For $(a'(X),a(X))\in \I_{\la,\al'}\!\times\I_{\la,\al}$,
let $C_{a',a} = F[X]\big(a'(X),a(X)\big)$ be the $F[X]$-submodule of
$\R_{\la,\al'}\!\times \R_{\la,\al}$ generated by $(a'(X),a(X))$, i.e.
\begin{align}\label{C_a',a=ga}
 C_{a'\!,a}
  = \big\{\,(g(X)a'(X),\,g(X)a(X))\in \I_{\la,\al'}\!\times\I_{\la,\al}\;\big|\;
  g(X)\in F[X]\,\big\}.
\end{align}
Then $C_{a'\!,a}$ is a random double $\la$-twisted code
of cycle length $(\al' n,\al n)$ over $F$.
By Lemma~\ref{tau_la,al}, Remark~\ref{tau_la,be},
we can take $b'(X),b(X)\in\I$ such that
$b'^{\tau_{\la,\al'}}(X)=a'(X)$ and $b^{\tau_{\la,\al}}(X)=a(X)$.
For any $g(X)\in F[X]$, by Eq.\eqref{sigma_la}
there is an $f(X)\in F[X]$ such that $\sigma_\la\big(f(X)\big)=g(X)$.
We get that
\begin{align*}
& \tau_{\la,\al}\big(f(X)b(X)\big)
 =\sigma_\la\big(f(X)\big)a(X)=g(X)a(X); \\
& \tau_{\la,\al'}\big(f(X)b'(X)\big)
 =\sigma_\la\big(f(X)\big)a'(X)=g(X)a'(X).
\end{align*}
Further, by Eq.\eqref{over J}, we can take the $f(X)$ such that $f(X)\in\I$.
So we get
\begin{equation}\label{C_a',a=fa}
 C_{a'\!,a}=
 \left\{\big(\sigma_\la(f(X))a'(X),\, \sigma_\la(f(X))a(X)\big)
 \,\big|\,  f(X)\in \I\right\}\subseteq\I_{\la,\al'}\!\times\I_{\la,\al}.
\end{equation}
For each $f(X)\in\I$, we have a random code word
$$
 c_{f,a'\!,a}=\big(\sigma_\la(f(X))a'(X),\,\sigma_\la(f(X))a(X)\big)\in C_{a'\!,a}.
$$
Recall that the code length equals $\al' n+\al n$, see Remark \ref{F-notation}.
We get a $0$-$1$ random variable over
the probability space $\I_{\la,\al'}\!\times\I_{\la,\al}$:
$$
Y_f=\begin{cases}
 1,& 0<\frac{{\rm w}(c_{f,a'\!,a})}{\al' n+\al n}\le\delta;\\
 0, & \mbox{otherwise}. \end{cases}
$$
Clearly, $Y_0=0$.
We further define an non-negative integer random variable
$$\textstyle
Y=\sum_{f(X)\in\I}Y_f.
$$
Because of Eq.\eqref{C_a',a=fa}, the variable $Y$
stands for the number of the non-zero random code words whose relative weight
is at most $\delta$. So
\begin{equation}\label{Delta<E}
 \Pr\big(\Delta(C_{a'\!,a})\le\delta\big)=\Pr\big(Y\ge 1\big)
 \le {\rm E}(Y),
\end{equation}
where ${\rm E}(Y)$ denotes the expectation of $Y$, and
the inequality follows by Markov Inequality,
e.g., see \cite[Theorem 3.1]{MU}.

In the rest of this subsection, we estimate the expectation ${\rm E}(Y)$.
For $f(X)\in\I$,  denote
\begin{equation}\label{C_f}
  C_f=\R f(X)=\I f(X)=\{g(X)f(X)\,|\,g(X)\in\I\}, ~~ d_f=\dim_F C_f.
\end{equation}
Then $C_f$ is an ideal (an $F[X]$-submodule) of $\R$ contained in $\I$.

\begin{lemma}\label{E(Y_f)<}
Keep the above notation. Let $\al''=\min\{\al',\al\}$
and $\delta\in(0,1-q^{-1})$ be as in Remark \ref{F-notation}.
Then the expectation
$${\rm E}(Y_f)\le q^{-2d_f+2d_fh_q\big(\frac{\al'+\al}{2\al''}\delta\big)
 +\log_q((\al'+\al)n)}.
$$
\end{lemma}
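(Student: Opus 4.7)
The plan is to compute $E(Y_f)=\Pr(Y_f=1)$ by first describing the distribution of the random codeword $c_{f,a',a}$ when $(a',a)$ is uniform on $\I_{\la,\al'}\times\I_{\la,\al}$, and then counting how many outcomes have small combined weight. Using the $\sigma_\la$-$F[X]$-isomorphisms $\tau_{\la,\al'}$ and $\tau_{\la,\al}$ from Lemma \ref{tau_la,al} and Remark \ref{tau_la,be}, I reparametrize $(a',a)$ by $(b',b)\in\I\times\I$. Then $\sigma_\la(f)a'=(fb')^{\tau_{\la,\al'}}$ and $\sigma_\la(f)a=(fb)^{\tau_{\la,\al}}$, so as $(b',b)$ runs uniformly over $\I\times\I$ the pair $(fb',fb)$ is uniform on $C_f\times C_f$, each value being attained exactly $q^{2(n-1)-2d_f}$ times. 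Setting $N=(\al'+\al)n$ and
\[
W=\Big\{(y_1,y_2)\in\tau_{\la,\al'}(C_f)\times\tau_{\la,\al}(C_f)\;\Big|\;0<{\rm w}(y_1)+{\rm w}(y_2)\le\delta N\Big\},
\]
this gives $E(Y_f)=q^{-2d_f}|W|$.

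To bound $|W|$, note that $\tau_{\la,\al'}(C_f)$ and $\tau_{\la,\al}(C_f)$ are $F[X]$-submodules of $\R_{\la,\al'}$ and $\R_{\la,\al}$, hence $\la$-constacyclic codes over $F$ of dimension $d_f$ and lengths $\al'n$ and $\al n$; by Lemma \ref{c c balanced} both are balanced. I stratify $W$ by the first-component weight $w_1={\rm w}(y_1)\in\{0,1,\ldots,\lfloor\delta N\rfloor\}$. Applying Lemma \ref{<=delta} to each factor (replacing $h_q(x)$ by $h_q(1-q^{-1})=1$, i.e.\ using the trivial bound $q^{d_f}$, whenever the relative weight exceeds $1-q^{-1}$), the number of $y_1$ with ${\rm w}(y_1)\le w_1$ is at most $q^{d_f h_q(\alpha)}$ for $\alpha=w_1/(\al'n)$, and the number of $y_2$ with ${\rm w}(y_2)\le \delta N-w_1$ is at most $q^{d_f h_q(\beta)}$ for $\beta=(\delta N-w_1)/(\al n)$. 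Therefore
\[
|W|\;\le\;\sum_{w_1=0}^{\lfloor\delta N\rfloor}q^{d_f[h_q(\alpha)+h_q(\beta)]}.
\]

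For the final step I invoke concavity of $h_q$, which gives $h_q(\alpha)+h_q(\beta)\le 2h_q\bigl((\alpha+\beta)/2\bigr)$. Since $\alpha+\beta$ is linear in $w_1$, its maximum on $[0,\delta N]$ is reached at an endpoint: the values are $\delta(\al'+\al)/\al$ at $w_1=0$ and $\delta(\al'+\al)/\al'$ at $w_1=\delta N$, the larger being $\delta(\al'+\al)/\al''$. Thus $(\alpha+\beta)/2\le\tfrac{\al'+\al}{2\al''}\delta$ for every $w_1$; combined with the fact that the sum has at most $\delta N+1\le N$ terms, this gives $|W|\le N\cdot q^{2d_f\,h_q(\tfrac{\al'+\al}{2\al''}\delta)}$, and multiplying by $q^{-2d_f}$ yields the claimed bound.

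The main obstacle is the asymmetry $\al'\ne\al$: any information set of the product code $\tau_{\la,\al'}(C_f)\times\tau_{\la,\al}(C_f)$ must split into an information set of each factor (by a dimension count), and the resulting point-coverage numbers on the two sides cannot be made to coincide unless $\al'=\al$, so Lemma \ref{<=delta} cannot be applied directly to the product. Treating the two factors separately, and crucially summing over the single variable $w_1$ rather than over the pair $(w_1,w_2)$, is what keeps the combinatorial overhead to a single $\log_q N$ and produces the $\log_q((\al'+\al)n)$ term in the exponent.
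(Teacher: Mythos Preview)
Your proof is correct and follows essentially the same route as the paper: identify the range of $c_{f,a',a}$ as the product $\tau_{\la,\al'}(C_f)\times\tau_{\la,\al}(C_f)$ (equivalently $\I_{\la,\al'}\sigma_\la(f)\times\I_{\la,\al}\sigma_\la(f)$) of two constacyclic codes of dimension $d_f$, stratify by the weight split between the two factors, apply Lemma~\ref{<=delta} (via Lemma~\ref{c c balanced}) to each factor, and finish with the concavity and monotonicity of $h_q$. The only cosmetic differences are that the paper sums over pairs $(w',w)$ with $w'+w=\lfloor\delta(\al'+\al)n\rfloor$ rather than over the single variable $w_1$, and bounds $\frac{w'}{\al' n}+\frac{w}{\al n}$ by the algebraic step $\frac{\al w'+\al' w}{2\al'\al n}\le\frac{\al^*(w'+w)}{2\al''\al^* n}$ rather than by evaluating a linear function of $w_1$ at its endpoints; the resulting inequality and the final $\log_q((\al'+\al)n)$ overhead are identical.
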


\begin{proof}
From that
$c_{f,a'\!,a}=\big(\sigma_\la(f(X))a'(X),\,\sigma_\la(f(X))a(X)\big)$,
we see that
$$ \big\{ c_{f,a'\!,a}\,\big|\, a'(X)\in\I_{\la,\al'},\; a(X)\in\I_{\la,\al}\big\}
= \I_{\la,\al'}\sigma_\la(f(X))\times \I_{\la,\al}\sigma_\la(f(X)).
$$
Denote
$M=\I_{\la,\al'}\sigma_\la(f(X))\times \I_{\la,\al}\sigma_\la(f(X))$.
By the notation in Lemma \ref{<=delta},
\begin{equation}\label{E(Y_f)}
{\rm E}(Y_f)=
\Pr\big({\rm E}(Y_f)=1\big)
 =(|M^{\le\delta}|-1)\big/|M|.
\end{equation}
By Lemma \ref{tau_la,al},
we have a $\sigma_{\la}$-$F[X]$-isomorphism
$$
C_f=\I f(X)\cong \I_{\la,\al}\sigma_\la(f(X));
$$
in particular,
$$
 \dim_F(\I_{\la,\al}\sigma_\la(f(X)))=\dim_F(C_f)=d_f.
$$
In the same way, we have a $\sigma_{\la}$-$F[X]$-isomorphism
\begin{equation}\label{|I|=q^d}
 C_f=\I f(X)\cong \I_{\la,\al'}\sigma_\la(f(X))
 \quad\mbox{hence}\quad
 \dim_F(\I_{\la,\al'}\sigma_\la(f(X)))=d_f.
\end{equation}
So,
\begin{equation}\label{|M|=q^2d}
|M|=\big|\I_{\la,\al'}\sigma_\la(f(X))\times \I_{\la,\al}\sigma_\la(f(X))\big|=q^{d_f}q^{d_f}=q^{2d_f}.
\end{equation}
It is easy to see that
\begin{align*}
M^{\le\delta}&=
\big(\I_{\la,\al'}\sigma_\la(f(X))\times
 \I_{\la,\al}\sigma_\la(f(X))\big)^{\le\delta}\\
&\textstyle=\bigcup\limits_{\scriptsize\begin{matrix}w',w\ge 0\\
 w'+w=\lfloor\delta(\al'+\al)n\rfloor\end{matrix}}
\big(\I_{\la,\al'}\sigma_\la(f(X))\big)^{\le\frac{w'}{\al' n}}\times
\big(\I_{\la,\al}\sigma_\la(f(X))\big)^{\le\frac{w}{\al n}}.
\end{align*}
Thus
\begin{align*}
|M^{\le\delta}|\,
\textstyle\le\sum\limits_{\scriptsize\begin{matrix}w',w\ge 0\\
 w'+w=\lfloor\delta(\al'+\al)n\rfloor\end{matrix}}
\big|\big(\I_{\la,\al'}\sigma_\la(f(X))\big)^{\le\frac{w'}{\al'n}}\big|
\cdot
\big|\big(\I_{\la,\al}\sigma_\la(f(X))\big)^{\le\frac{w}{\al n}}\big|.
\end{align*}
By Lemma \ref{<=delta}, Lemma \ref{c c balanced} and Eq,\eqref{|I|=q^d},
$$
\big|\big(\I_{\la,\al'}\sigma_\la(f(X))\big)^{\le\frac{w'}{\al'n}}\big|
\le q^{d_fh_q(\frac{w'}{\al' n})},\quad
\big|\big(\I_{\la,\al}\sigma_\la(f(X))\big)^{\le\frac{w}{\al n}}\big|
\le q^{d_fh_q(\frac{w}{\al n})}.
$$
Thus
$$
|M^{\le\delta}|\,
 \textstyle\le\sum\limits_{\scriptsize\begin{matrix}w',w\ge 0\\
 w'+w=\lfloor\delta(\al'+\al)n\rfloor\end{matrix}}
 q^{d_f\big(h_q(\frac{w'}{\al' n})+h_q(\frac{w}{\al n})\big)}.
$$
Let $\al^*=\max\{\al',\al\}$. Then $\al''\le\al',\al\le\al^*$
and $\al'\al=\al''\al^*$.
Recall that $h_q(x)$ is concave and increasing
in the interval $[0,1\!-\!\frac{1}{q}]$. So
\begin{align*}
&\textstyle h_q(\frac{w'}{\al' n})+h_q(\frac{w}{\al n})
\le 2h_q\big(\frac{\frac{w'}{\al' n}+\frac{w}{\al n}}{2}\big)
=2h_q\big(\frac{\al w'+\al' w}{2\al'\al n}\big)\\[3pt]
&\textstyle \le 2h_q\big(\frac{\al^* w'+\al^* w}{2\al''\al^* n}\big)
 =2h_q\big(\frac{w'+w}{2\al'' n}\big)
 \le 2h_q\big(\frac{\delta(\al'+\al)n}{2\al'' n}\big)
\textstyle =2h_q\big(\frac{\al'+\al}{2\al''}\delta\big).
\end{align*}
Further, the number of the pairs $(w',w)$ satisfying that
$w',w\ge 0$ and $w'+w=\lfloor\delta(\al'+\al)n\rfloor$ is at most
$(\al'+\al)n$. We obtain
\begin{align*}
|M^{\le\delta}|\,
& \textstyle
 \le (\al'+\al)n\cdot q^{2d_fh_q\big(\frac{\al'+\al}{2\al''}\delta\big)}
 =q^{2d_fh_q\big(\frac{\al'+\al}{2\al''}\delta\big)+\log_q((\al'+\al)n)}.
\end{align*}
Combining it with Eq.\eqref{E(Y_f)} and Eq.\eqref{|M|=q^2d}, we get
$$
{\rm E}(Y_f)\le
q^{2d_fh_q\big(\frac{\al'+\al}{2\al''}\delta\big)+\log_q((\al'+\al)n)}\big/q^{2d_f}
=q^{-2d_f+2d_fh_q\big(\frac{\al'+\al}{2\al''}\delta\big)+\log_q((\al'+\al)n)}.
$$
We are done.
\end{proof}

Let $\mu(n)$ be as in Eq.\eqref{def mu(n)}.
For $\mu(n)\le d\le n-1$, set
\begin{equation}\label{e Omega_d}
 \Omega_d=\{C\,|\,\mbox{$C$ is an $F[X]$-submodule of $\I$},\;\dim_F C=d\}.
\end{equation}
For any ideal $C$ of $\R$, let
\begin{equation}\label{C^*}
 C^*=\{c\in C\,|\,\R c=C\} \quad (\mbox{note that $\R c= Cc$ for $c\in C$}).
\end{equation}

\begin{lemma}\label{l Omega_d}
Let notation be as above. Then

{\bf(1)} $|\Omega_d|\le n^{\frac{d}{\mu(n)}}$.

{\bf(2)}
$\I-\{0\}=\bigcup_{d=\mu(n)}^{n-1}
 \bigcup_{C\in\Omega_d}C^*$.
\end{lemma}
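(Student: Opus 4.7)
The plan is to exploit the explicit direct-sum decomposition $\mathcal{I}=F_1\oplus\cdots\oplus F_m$ established in Eq.\eqref{I=}. Since each $F_i$ is a minimal ideal (a field) and distinct $F_i$'s are annihilated by coprime polynomials, any $F[X]$-submodule of $\mathcal{I}$ is a direct sum of some subfamily $\{F_i\mid i\in S\}$, so submodules of $\mathcal{I}$ are in bijection with subsets $S\subseteq\{1,\dots,m\}$, with the corresponding dimension equal to $\sum_{i\in S}d_i$.

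For part (1), I would bound $|\Omega_d|$ by counting the relevant subsets $S$. A submodule of dimension $d$ corresponds to a subset $S$ with $\sum_{i\in S}d_i=d$; since every $d_i\ge\mu(n)$, such an $S$ must satisfy $|S|\le d/\mu(n)$. Setting $K=\lfloor d/\mu(n)\rfloor$, the number of subsets of $\{1,\dots,m\}$ of cardinality at most $K$ is at most $(m+1)^K$ (for instance, by encoding each such subset as a length-$K$ word over $\{0,1,\dots,m\}$, using $0$ as padding). Because $\sum_{i=0}^{m}d_i=n$ with every $d_i\ge 1$, we have $m+1\le n$, so $(m+1)^K\le n^{K}\le n^{d/\mu(n)}$, yielding the desired inequality.

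For part (2), I would argue by directly inspecting $\mathcal{R}c$ for $0\neq c\in\mathcal{I}$. Since $\mathcal{I}$ is an ideal of $\mathcal{R}$ (in fact a ring with its own idempotent identity, by Eq.\eqref{over J}), the cyclic submodule $C:=\mathcal{R}c$ is an $F[X]$-submodule of $\mathcal{I}$, and by Eq.\eqref{over J} it satisfies $\mathcal{R}c=Cc$, so $c\in C^{*}$. Writing $d:=\dim_F C$, the direct-sum decomposition forces $C$ to be a sum of some of the $F_i$'s with $i\ge 1$, hence $\mu(n)\le d\le n-1$ (since $c\neq 0$ and $\dim_F\mathcal{I}=n-1$). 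This places $c$ inside $\bigcup_{d=\mu(n)}^{n-1}\bigcup_{C\in\Omega_d}C^{*}$; the reverse inclusion is immediate from $C^{*}\subseteq C\setminus\{0\}\subseteq\mathcal{I}\setminus\{0\}$.

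I do not expect any real obstacle: the whole statement is essentially a bookkeeping consequence of the semisimple decomposition already recorded in Eq.\eqref{I=}. The only point requiring a little care is the combinatorial estimate in (1), where one needs the elementary bound on the number of subsets of size at most $K$ together with the trivial inequality $m+1\le n$; once that is in hand, the stated exponent $d/\mu(n)$ follows directly.
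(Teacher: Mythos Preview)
Your proposal is correct and follows essentially the same route as the paper: both parts rest on the semisimple decomposition in Eq.\eqref{I=}, with (1) reduced to counting subsets $S\subseteq\{1,\dots,m\}$ of cardinality at most $d/\mu(n)$ and (2) handled by observing that any nonzero $f\in\mathcal I$ lies in $C_f^{*}$ with $C_f\in\Omega_{d_f}$. The only cosmetic difference is your encoding bound $(m+1)^{K}\le n^{K}$ versus the paper's slightly tighter $m^{K}\le n^{K}$ (the latter works because the empty subset never arises for $d\ge\mu(n)$), but both reach the stated estimate.
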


\begin{proof} (1).
By Eq.\eqref{I=},
$\I=F_1\oplus\cdots\oplus F_m$ with
each $F_i$ being a field with $d_i=\dim_F F_i\ge\mu(n)$.
Each $C\in\Omega_d$ is a direct sum of some of $F_1,\cdots,F_m$,
and the number of direct summands is at most $d/\mu(n)$. Thus
$|\Omega_d|\le m^{d/\mu(n)}\le n^{d/\mu(n)}$.

(2).~ For any $f\in\I-\{0\}$, $f\in C_f^*$ and $C_f\in\Omega_{d_f}$.
\end{proof}

\begin{lemma}\label{E(Y)}
Let ${\rm E}(Y)$ be as in Eq.\eqref{Delta<E},
 $\al''\!=\min\{\al',\al\}$ and $\delta\in(0,1-q^{-1})$ as in Remark~\ref{F-notation}.
If $\frac{1}{2}-h_q(\frac{\al'+\al}{2\al''}\delta)
  -\frac{\log_q n}{2\mu(n)}>0$, then
$$
 {\rm E}(Y)\le q^{-2\mu(n)\big(\frac{1}{2}-h_q(\frac{\al'+\al}{2\al''}\delta)
   -\frac{3\log_q n}{2\mu(n)}\big)+\log_q(\al'+\al)}.
$$
\end{lemma}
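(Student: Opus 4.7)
The plan is to bound ${\rm E}(Y)$ by carefully summing the per-word bound from Lemma \ref{E(Y_f)<} over all nonzero $f(X)\in\I$, grouping the $f$'s according to the ideal they generate so that we can apply Lemma \ref{l Omega_d}.

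First, by linearity of expectation, ${\rm E}(Y)=\sum_{f\in\I}{\rm E}(Y_f)$, and since $Y_0=0$ the $f=0$ term can be dropped. Using Lemma \ref{l Omega_d}(2), which exhibits $\I-\{0\}$ as the disjoint union $\bigcup_{d=\mu(n)}^{n-1}\bigcup_{C\in\Omega_d}C^*$ (each nonzero $f$ lies in exactly one $C^*$, namely $C=C_f$), I will rewrite
$$
{\rm E}(Y)\ =\ \sum_{d=\mu(n)}^{n-1}\ \sum_{C\in\Omega_d}\ \sum_{f\in C^*} {\rm E}(Y_f).
$$
For every $f\in C^*$ with $C\in\Omega_d$ we have $d_f=\dim_F C_f=d$, so by Lemma \ref{E(Y_f)<},
$$
{\rm E}(Y_f)\ \le\ q^{-2d+2d\,h_q(\frac{\al'+\al}{2\al''}\delta)+\log_q((\al'+\al)n)}.
$$

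Next, I will use the trivial bound $|C^*|\le|C|=q^d$ together with $|\Omega_d|\le n^{d/\mu(n)}$ from Lemma \ref{l Omega_d}(1). Writing $n^{d/\mu(n)}=q^{d\log_q n/\mu(n)}$ and collecting exponents of $q$, the triple sum telescopes into
$$
{\rm E}(Y)\ \le\ (\al'+\al)n \cdot\!\!\sum_{d=\mu(n)}^{n-1}\!\! q^{\,d\left(\frac{\log_q n}{\mu(n)}-1+2h_q(\frac{\al'+\al}{2\al''}\delta)\right)}
\ =\ (\al'+\al)n \cdot\!\!\sum_{d=\mu(n)}^{n-1}\!\! q^{\,-2d\left(\frac12-h_q(\frac{\al'+\al}{2\al''}\delta)-\frac{\log_q n}{2\mu(n)}\right)}.
$$

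Here the hypothesis $\frac{1}{2}-h_q(\frac{\al'+\al}{2\al''}\delta)-\frac{\log_q n}{2\mu(n)}>0$ enters: the exponent of $q$ inside the sum is strictly negative and linear in $d$, so the terms are geometrically decreasing and the largest occurs at $d=\mu(n)$. Bounding the sum by the number of terms (at most $n$) times this largest term gives
$$
{\rm E}(Y)\ \le\ (\al'+\al)n\cdot n\cdot q^{-2\mu(n)\left(\frac12-h_q(\frac{\al'+\al}{2\al''}\delta)-\frac{\log_q n}{2\mu(n)}\right)}.
$$
The final step is just bookkeeping: writing $(\al'+\al)n\cdot n = q^{\log_q(\al'+\al)+2\log_q n}$ and absorbing the $2\log_q n$ into the parenthesized quantity (it contributes $-\frac{\log_q n}{\mu(n)}$ inside, combining with the existing $-\frac{\log_q n}{2\mu(n)}$ to yield $-\frac{3\log_q n}{2\mu(n)}$) produces the asserted bound. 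There is no real obstacle; the only subtlety is recognizing the disjointness in Lemma \ref{l Omega_d}(2) and observing that the hypothesis guarantees the geometric sum is dominated by its $d=\mu(n)$ term, which is exactly the sharpness the lemma needs for the subsequent asymptotic argument.
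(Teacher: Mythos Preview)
Your proof is correct and follows essentially the same approach as the paper: decompose $\I-\{0\}$ via Lemma~\ref{l Omega_d}(2), apply Lemma~\ref{E(Y_f)<} with $d_f=d$ on each stratum, bound $|C^*|\le q^d$ and $|\Omega_d|\le n^{d/\mu(n)}$, then use the hypothesis to bound the sum over $d$ by $n$ times its $d=\mu(n)$ term. The only cosmetic difference is that you factor out $(\al'+\al)n$ as a prefactor while the paper keeps $\log_q((\al'+\al)n)$ inside the exponent throughout; the final bookkeeping is identical.
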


\begin{proof} Note that $Y_0=0$. By the linearity of
the expectation and Lemma \ref{l Omega_d}(2),
\begin{align*}\textstyle
{\rm E}(Y)=\sum\limits_{f(X)\in\I-\{0\}}{\rm E}(Y_f)
=\sum\limits_{d=\mu(n)}^{n-1}\,\sum\limits_{C\in\Omega_d}
\sum\limits_{f(X)\in C^*}{\rm E}(Y_f).
\end{align*}
By Lemma \ref{E(Y_f)<}, Eq.\eqref{e Omega_d} and Lemma \ref{l Omega_d}(1),
\begin{align*}
\textstyle\sum\limits_{C\in\Omega_d}\,
\sum\limits_{f(X)\in C^*}{\rm E}(Y_f)
&\textstyle\le \sum\limits_{C\in\Omega_d}\sum\limits_{f(X)\in C^*}
q^{-2d+2dh_q(\frac{\al'+\al}{2\al''}\delta)
 +\log_q((\al'+\al)n)}\\
&\textstyle\le \sum\limits_{C\in\Omega_d}
 |C|\cdot q^{-2d+2dh_q(\frac{\al'+\al}{2\al''}\delta)
 +\log_q((\al'+\al)n)}\\
&\textstyle = \sum\limits_{C\in\Omega_d}
  q^{-d+2dh_q(\frac{\al'+\al}{2\al''}\delta)
 +\log_q((\al'+\al)n)}\\
&\le n^{\frac{d}{\mu(n)}}q^{-d+2dh_q(\frac{\al'+\al}{2\al''}\delta)
 +\log_q((\al'+\al)n)}\\
&=q^{-d+2dh_q(\frac{\al'+\al}{2\al''}\delta)
 +\log_q((\al'+\al)n)+\frac{d\log_q n}{\mu(n)}}\\
&=q^{-2d\big(\frac{1}{2}-h_q(\frac{\al'+\al}{2\al''}\delta)
   -\frac{\log_q n}{2\mu(n)}\big)+\log_q((\al'+\al)n)}.
\end{align*}
Since
$\frac{1}{2}-h_q(\frac{\al'+\al}{2\al''}\delta)
  -\frac{\log_q n}{2\mu(n)}>0$ and $d\ge\mu(n)$,
\begin{align*}
  {\rm E}(Y)
  \textstyle\le\sum\limits_{d=\mu(n)}^{n-1}
   q^{-2\mu(n)\big(\frac{1}{2}-h_q(\frac{\al'+\al}{2\al''}\delta)
   -\frac{\log_q n}{2\mu(n)}\big)+\log_q((\al'+\al)n)}.
\end{align*}
The number of the indexes from $\mu(n)$ to $n-1$ is less than $n$. So
\begin{align*}
  {\rm E}(Y)
   &\le n\cdot q^{-2\mu(n)\big(\frac{1}{2}-h_q(\frac{\al'+\al}{2\al''}\delta)
   -\frac{\log_q n}{2\mu(n)}\big)+\log_q n+\log_q(\al'+\al)}\\
   &= q^{-2\mu(n)\big(\frac{1}{2}-h_q(\frac{\al'+\al}{2\al''}\delta)
   -\frac{3\log_q n}{2\mu(n)}\big)+\log_q(\al'+\al)}.
\end{align*}
We are done.
\end{proof}

\begin{lemma}\label{R(C)=}
{\bf(1)} $\dim_F C_{a'\!,a}\le n-1$, i.e., 
${\rm R}(C_{a'\!,a})\le\frac{1}{\al'+\al}-\frac{1}{(\al'+\al)n}$.

{\bf(2)} $\Pr\big(\dim_F C_{a'\!,a}= n-1\big)
  \ge\big(\frac{1}{4}\big)^{\frac{1}{\mu(n)}}$.
\end{lemma}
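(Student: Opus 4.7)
For part (1), my plan is to read the inequality straight off the parameterization in Eq.\eqref{C_a',a=fa}: since $C_{a'\!,a}$ is the image of the $F$-linear map $\I\to\R_{\la,\al'}\!\times\R_{\la,\al}$, $f\mapsto(\sigma_\la(f)a',\sigma_\la(f)a)$, its $F$-dimension is at most $\dim_F\I=n-1$ by Eq.\eqref{dim I}. Dividing by the code length $(\al'+\al)n$ yields the stated rate bound.

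For part (2), my plan is to transport the problem to $\I\times\I$ via the $\sigma_\la$-$F[X]$-iso\-morphisms of Lemma~\ref{tau_la,al} and Remark~\ref{tau_la,be}. Writing $a=\tau_{\la,\al}(b)$ and $a'=\tau_{\la,\al'}(b')$ with $b,b'\in\I$, the semilinearity $\tau(fx)=\sigma_\la(f)\tau(x)$ shows that the map in (1) factors as $f\mapsto(fb',fb)\in\I\times\I$ followed by the bijection $\tau_{\la,\al'}\times\tau_{\la,\al}$. Its kernel is therefore $\mathrm{ann}_\I(b')\cap\mathrm{ann}_\I(b)$, and $\dim_F C_{a'\!,a}=n-1$ precisely when this intersection is zero. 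The bijectivity of $\tau_{\la,\al'}\times\tau_{\la,\al}$ also means that a uniform sample $(a',a)$ in $\I_{\la,\al'}\times\I_{\la,\al}$ corresponds to a uniform sample $(b',b)$ in $\I\times\I$, so I can compute the probability on the latter space.

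Next I would invoke the decomposition $\I=F_1\oplus\cdots\oplus F_m$ from Eq.\eqref{I=}, where each $F_i$ is a field with $d_i=\dim_F F_i\ge\mu(n)$. Writing $b=(b_1,\ldots,b_m)$ and $b'=(b'_1,\ldots,b'_m)$, the annihilator $\mathrm{ann}_\I(b)$ is the sum of those $F_i$ with $b_i=0$ (similarly for $b'$), so $\mathrm{ann}_\I(b')\cap\mathrm{ann}_\I(b)=0$ iff for every $i$ at least one of $b_i,b'_i$ is nonzero. The coordinates $(b'_i,b_i)$ are mutually independent and uniform on $F_i\times F_i$, so $\Pr(b_i=0=b'_i)=q^{-2d_i}$ and independence across $i$ yields
$$
\Pr\bigl(\dim_F C_{a'\!,a}=n-1\bigr)=\prod_{i=1}^{m}\bigl(1-q^{-2d_i}\bigr).
$$

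To finish, I will apply the elementary inequality $1-x\ge 4^{-x}$ on $[0,\tfrac12]$ (and note $q^{-2d_i}\le \tfrac14$), so the task reduces to showing $\sum_i q^{-2d_i}\le 1/\mu(n)$. Using $d_i\ge\mu(n)$ together with the degree identity $\sum_i d_i=\deg\wh\phi_0=n-1$ to bound $m\le (n-1)/\mu(n)$, I get $\sum_i q^{-2d_i}\le (n-1)/\bigl(\mu(n)\,q^{2\mu(n)}\bigr)$. The standing hypothesis $\mu(n)>\log_q n$ from Remark~\ref{omega(n)>} then forces $q^{2\mu(n)}>n^2>n-1$, delivering the claimed bound $(1/4)^{1/\mu(n)}$. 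I expect the only step that is not a direct unwinding of the $\tau$-isomorphisms to be this final numerical estimate, which is exactly where the slack built into Remark~\ref{omega(n)>} is consumed.
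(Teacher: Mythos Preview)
Your argument is correct. Part (1) is identical in spirit to the paper's proof. For part (2), however, you take a genuinely different route: you transport both coordinates back to $\I\times\I$ via $\tau_{\la,\al'}\times\tau_{\la,\al}$, identify the event $\dim_F C_{a'\!,a}=n-1$ exactly as ``for every $i$, $(b'_i,b_i)\ne(0,0)$'', and obtain the \emph{exact} probability $\prod_{i=1}^m(1-q^{-2d_i})$. The paper instead throws away the coordinate $a'$ altogether and uses only the sufficient condition $a\in\I_{\la,\al}^*$ (equivalently $b\in\I^*$), which yields the cruder lower bound $\prod_{i=1}^m(1-q^{-d_i})$. Both arguments then finish with essentially the same elementary inequality---your $1-x\ge 4^{-x}$ on $[0,\tfrac12]$ is the paper's $(1-1/h)^h\ge\tfrac14$ in disguise---combined with $m\le(n-1)/\mu(n)$ and the standing hypothesis $\mu(n)>\log_q n$. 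Your approach buys a sharper intermediate estimate (quadratic rather than linear decay in each factor) at the cost of a small amount of extra bookkeeping with annihilators; the paper's approach is slightly quicker because it never needs to analyze the joint kernel, only whether one coordinate is a unit.
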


\begin{proof}
(1). By Eq.\eqref{C_a',a=fa}, $|C_{a'\!,a}|\le |\I|=q^{n-1}$.
That is, $\dim_F C_{a'\!,a}\le n-1$.

(2). If $a(X)\in\I_{\la,\al}$ satisfies that $\I_{\la,\al}a(X)=\I_{\la,\al}$,
i.e., $a(X)\in\I_{\la,\al}^*$ in notation of Eq.\eqref{C^*},
then $\dim_F C_{a'\!,a}=n-1$ (by Eq.\eqref{C_a',a=fa} again).
Thus
$$\textstyle
  \Pr\big(\dim_F C_{a'\!,a}=n-1\big)
  \ge \Pr\big(a(X)\in\I_{\la,\al}^*\big)
  =\frac{|\I_{\la,\al}^*|}{|\I_{\la,\al}|}.
$$
By the isomorphism of Lemma \ref{tau_la,al},
$\frac{|\I_{\la,\al}^*|}{|\I_{\la,\al}|}=\frac{|\I^*|}{|\I|}$. Thus
$$\textstyle
  \Pr\big(\dim_F C_{a,b}=n-1\big)
  \ge\frac{|\I^*|}{|\I|}.
$$
By Eq.\eqref{decom R}, $|\I|=q^{d_1+\cdots+d_m}$, $|\I^*|=(q^{d_1}-1)\cdots(q^{d_m}-1)$,
where $d_1+\cdots+d_m=n-1$ and $d_i\ge\mu(n)$ for $i=1,\cdots,m$
(cf. Eq.\eqref{def mu(n)}).
Hence $m\le\frac{n}{\mu(n)}$.
\begin{align*}\textstyle
\frac{|\I^*|}{|\I|}
 &\textstyle = (1-\frac{1}{q^{d_1}})\cdots(1-\frac{1}{q^{d_m}})
 \ge (1-\frac{1}{q^{\mu(n)}})^m\\
 &\textstyle\ge (1-\frac{1}{q^{\mu(n)}})^{\frac{n}{\mu(n)}}
 =(1-\frac{1}{q^{\mu(n)}})^
  {q^{\mu(n)}\frac{n}{q^{\mu(n)}\mu(n)}}.
\end{align*}
Since the sequence $(1-\frac{1}{h})^h$ for $h=2,3,\cdots$ is increasing and
$(1-\frac{1}{2})^2=\frac{1}{4}$, we have
$(1-\frac{1}{q^{\mu(n)}})^{q^{\mu(n)}}\ge \frac{1}{4}$.
By the assumption in Remark \ref{omega(n)>},
$\mu(n)>\log_q n$, hence
$q^{\mu(n)}>q^{\log_q n}=n$.
We obtain that $\frac{|\I^*|}{|\I|}\ge(\frac{1}{4})^{\frac{1}{\mu(n)}}$.
\end{proof}

\subsection{Asymptotic property of the random code $C_{a'\!,a}$ over $F$}

Keep the notation in Remark \ref{F-notation}.
From Remark \ref{omega(n)>}, we can assume that
positive integers $n_1,n_2,\cdots$ satisfy:
\begin{align}\label{n_i...}\textstyle
\gcd(n_i,qt)=1, ~\forall~i=1,2,\cdots,~~\mbox{and}~~
\lim\limits_{i\to\infty}\frac{\log_q n_i}{\mu(n_i)}=0.
\end{align}
Note that the assumption also implies that $\mu(n_i)>\log_q n_i$
(for $i$ large enough) and
$\mu(n_i)\to\infty$.
In Eq.\eqref{C_a',a=fa}, taking $n=n_i$,
we have the random double $\la$-twisted codes $C_{a'\!,a}^{(i)}$
of cycle length $(\al' n_i,\al n_i)$ over $F$. Then
\begin{align}\label{sequence over F}
 C_{a'\!,a}^{(1)},~ C_{a'\!,a}^{(2)},~\cdots,~C_{a'\!,a}^{(i)},~\cdots
\end{align}
is a sequence of random double $\la$-twisted codes over $F$,
and the length $\al' n_i+\al n_i$ of $C_{a'\!,a}^{(i)}$ goes to infinity.

\begin{theorem}\label{good sequence over F}
Let notation be as in Eq.\eqref{n_i...} and Eq.\eqref{sequence over F}.
Assume that $\delta\in(0,1-q^{-1})$ satisfying that
$h_q(\frac{\al'+\al}{2\al''}\delta)<\frac{1}{2}$, where $\al''=\min\{\al',\al\}$.
 Then

{\bf(1)} $\lim\limits_{i\to\infty}
 \Pr\big(\Delta(C_{a'\!,a}^{(i)})>\delta\big)=1$.

{\bf(2)} $\lim\limits_{i\to\infty}
 \Pr\big(\dim_F C_{a'\!,a}^{(i)}=n_i\!-\!1\big)=1$.
\end{theorem}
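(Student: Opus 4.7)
The plan is to assemble the pieces already established in Subsection 3.2 and let the hypothesis on $n_i$ in Eq.\eqref{n_i...} do the asymptotic work. Both parts are essentially one-line deductions from the lemmas, so the proof is short.

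For part (2), I would start directly from Lemma \ref{R(C)=}(2), which gives
$$
\Pr\bigl(\dim_F C_{a'\!,a}^{(i)}=n_i-1\bigr)\ge\Bigl(\tfrac{1}{4}\Bigr)^{1/\mu(n_i)}.
$$
The hypothesis in Eq.\eqref{n_i...} forces $\mu(n_i)\to\infty$ (as noted right after it), so $(1/4)^{1/\mu(n_i)}\to 1$, and the squeeze against $1$ gives the claim.

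For part (1), I would use Eq.\eqref{Delta<E} together with Lemma \ref{E(Y)}. First, I need to verify the side condition $\tfrac12-h_q(\tfrac{\al'+\al}{2\al''}\delta)-\tfrac{\log_q n_i}{2\mu(n_i)}>0$ required by Lemma \ref{E(Y)}: by the theorem's assumption $h_q(\tfrac{\al'+\al}{2\al''}\delta)<\tfrac12$, and by Eq.\eqref{n_i...} the term $\tfrac{\log_q n_i}{2\mu(n_i)}\to 0$, so the quantity is eventually (and in fact bounded away from $0$ by) a strictly positive constant, say $\eta>0$. Applying Lemma \ref{E(Y)} with $n=n_i$ yields
$$
\Pr\bigl(\Delta(C_{a'\!,a}^{(i)})\le\delta\bigr)\le{\rm E}(Y)\le q^{-2\mu(n_i)\eta_i+\log_q(\al'+\al)},
$$
where $\eta_i=\tfrac12-h_q(\tfrac{\al'+\al}{2\al''}\delta)-\tfrac{3\log_q n_i}{2\mu(n_i)}$ tends to the positive limit $\tfrac12-h_q(\tfrac{\al'+\al}{2\al''}\delta)>0$. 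Since $\mu(n_i)\to\infty$, the exponent $-2\mu(n_i)\eta_i+\log_q(\al'+\al)\to-\infty$, so the right-hand side tends to $0$, whence $\Pr(\Delta(C_{a'\!,a}^{(i)})>\delta)\to 1$.

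There is no real obstacle here; all the analytic work has been done in Lemmas \ref{E(Y_f)<}--\ref{R(C)=}, and the proof is simply the observation that under Eq.\eqref{n_i...} the quantities $\mu(n_i)\to\infty$ and $\tfrac{\log_q n_i}{\mu(n_i)}\to 0$ drive both bounds in the right direction. The only thing worth being careful about is checking that the side condition of Lemma \ref{E(Y)} is indeed satisfied for all sufficiently large $i$, which is immediate from the strict inequality $h_q(\tfrac{\al'+\al}{2\al''}\delta)<\tfrac12$ in the hypothesis.
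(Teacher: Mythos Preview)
Your proposal is correct and follows essentially the same route as the paper's own proof: part (2) comes directly from Lemma \ref{R(C)=}(2) together with $\mu(n_i)\to\infty$, and part (1) combines Eq.\eqref{Delta<E} with Lemma \ref{E(Y)} and the assumption Eq.\eqref{n_i...} to drive the exponent to $-\infty$. If anything, you are slightly more explicit than the paper in verifying the side condition of Lemma \ref{E(Y)} before invoking it.
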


\begin{proof}
(1).~ By Eq.\eqref{Delta<E} and Lemma~\ref{E(Y)},
\begin{align*}
\lim\limits_{i\to\infty}\Pr\big(\Delta(C_{a'\!,a}^{(i)})\le\delta\big)
\le \lim\limits_{i\to\infty}
 q^{-2\mu(n_i)\big(\frac{1}{2}-h_q(\frac{\al'\!+\al}{2\al''}\delta)
   -\frac{3\log_q n_i}{2\mu(n_i)}\big)+\log_q(\al'+\al)}.
\end{align*}
Note that $\frac{1}{2}-h_q(\frac{\al'\!+\al}{2\al''}\delta)>0$.
By Eq.\eqref{n_i...}, we have
${\lim\limits_{i\to\infty}\frac{\log_q n_i}{\mu(n_i)}=0}$,
which also implies that $\mu(n_i)\to\infty$.
Then, there is a positive real number $\delta_0$ such that
$\frac{1}{2}-h_q(\frac{\al'\!+\al}{2\al''}\delta)
   -\frac{3\log_q n_i}{2\mu(n_i)}>\delta_0$ for large enough $i$.
So,
$\lim\limits_{i\to\infty}\Pr\!\big(\Delta(C_{a'\!,a}^{(i)})\le\delta\big)=0$.

(2).~ By Lemma \ref{R(C)=},
$
\lim\limits_{i\to\infty}\Pr\big(\dim_F C_{a'\!,a}^{(i)}=n_i\!-\!1\big)
\ge \lim\limits_{i\to\infty}\big(\frac{1}{4}\big)^{\frac{1}{\mu(n_i)}}
=1.
$
\end{proof}

As a consequence, the double $\la$-twisted codes of ratio $\al'/\al$
 over finite fields are asymptotically good.

\begin{theorem}\label{c good sequence over F}
Keep the notation in Remark \ref{F-notation}.
Assume that $\delta\in(0,1-q^{-1})$ satisfying that
$h_q(\frac{\al'\!+\al}{2\al''}\delta)<\frac{1}{2}$.
 Then there is a sequence $C_1,C_2,\cdots$
of double $\la$-twisted codes $C_i$
of ratio $\al'/\al$ over $F$
such that the length of $C_i$ goes to infinity,
$\lim\limits_{i\to\infty}{\rm R}(C_i)=\frac{1}{\al'\!+\al}$,
and $\Delta(C_i)>\delta$ for all $i=1,2,\cdots$.
\end{theorem}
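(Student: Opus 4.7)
The plan is to derive this statement as a direct derandomization of Theorem~\ref{good sequence over F}. The two assertions of that theorem---probability of good relative distance $\to 1$, and probability of maximal dimension $n_i-1$ $\to 1$---each hold over the same sample space $\I_{\la,\al'}\!\times\I_{\la,\al}$. So the natural move is to combine them via a union-bound style argument: for each $i$, let $A_i$ be the event that $\Delta(C_{a'\!,a}^{(i)})>\delta$ and $B_i$ be the event that $\dim_F C_{a'\!,a}^{(i)}=n_i-1$. Then $\Pr(A_i\cap B_i)\ge 1-\Pr(A_i^c)-\Pr(B_i^c)$, and by Theorem~\ref{good sequence over F} this lower bound tends to $1$. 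In particular $\Pr(A_i\cap B_i)>0$ for every $i$ large enough, so there exists a realization $(a'_i,a_i)\in\I_{\la,\al'}\!\times\I_{\la,\al}$ such that the concrete code $C_i:=C_{a'_i\!,a_i}^{(i)}$ satisfies both $\Delta(C_i)>\delta$ and $\dim_F C_i=n_i-1$ simultaneously. Discarding finitely many small $i$ (or padding with any admissible codes) yields a sequence $C_1,C_2,\ldots$ of honest (non-random) double $\la$-twisted codes of ratio $\al'/\al$ over $F$.

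Next I would verify the three required properties for this sequence. For the length: Remark~\ref{F-notation} tells us that the code length of $C_{a'\!,a}^{(i)}$ is $(\al'+\al)n_i$, and $n_i\to\infty$ (a consequence of $\mu(n_i)\to\infty$ in Eq.\eqref{n_i...}, since $\mu(n_i)\le n_i$), so lengths go to infinity. For the relative minimum distance: we picked $C_i\in A_i$, so $\Delta(C_i)>\delta$ holds by construction for every $i$ in the chosen subsequence. For the rate: since $C_i\in B_i$, its information length equals $\dim_F C_i=n_i-1$, hence
\[
{\rm R}(C_i)=\frac{n_i-1}{(\al'+\al)n_i}=\frac{1}{\al'+\al}-\frac{1}{(\al'+\al)n_i}\longrightarrow\frac{1}{\al'+\al},
\]
as required. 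Together with Remark~\ref{rem weight} (asymptotic goodness is independent of the weight and rate conventions), this gives the theorem and, a fortiori, asymptotic goodness.

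The main obstacle has already been absorbed into Theorem~\ref{good sequence over F}, whose proof required the expectation estimate of Lemma~\ref{E(Y)} (balancedness via Lemma~\ref{c c balanced}, the entropy bound of Lemma~\ref{<=delta}, the concavity argument aggregating the two components into a single $h_q\!\big(\tfrac{\al'+\al}{2\al''}\delta\big)$ term) together with the lower bound on $\Pr(\dim_F C_{a'\!,a}=n-1)$ from Lemma~\ref{R(C)=}(2). Given those, the present theorem is essentially a bookkeeping step; the only subtle point to be careful about is ensuring that the events $A_i$ and $B_i$ are defined over the \emph{same} probability space $\I_{\la,\al'}\!\times\I_{\la,\al}$, so that the union bound makes sense---which is already the framework set up in Subsection~3.2.
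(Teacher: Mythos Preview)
Your proposal is correct and follows essentially the same approach as the paper's proof, which simply says ``we can take $C_i=C_{a'\!,a}^{(i)}$'' satisfying both $\Delta(C_i)>\delta$ and $\dim_F C_i=n_i-1$, then computes the rate as $\frac{n_i-1}{(\al'+\al)n_i}\to\frac{1}{\al'+\al}$. You have made explicit the union-bound justification for why both events can be realized simultaneously, which the paper leaves implicit.
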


\begin{proof} In Theorem \ref{good sequence over F}, we can take
$C_i=C_{a'\!,a}^{(i)}$ for $i=1,2,\cdots$ such that:
\begin{itemize}
\item \vskip-6pt
the length of $C_i$ is $\al' n_i+\al n_i$;
\item \vskip-4pt
the relative minimum distance $\Delta(C_i)>\delta$;
\item \vskip-4pt
the information length of $C_i$ is $\dim C_i=n_i-1$, hence the rate
\par
 ${\rm R}(C_i) =\frac{n_i-1}{\al' n_i+\al n_i}
=\frac{1}{\al'+\al}-\frac{1}{\al' n_i+\al n_i}$.
\end{itemize}
\vskip-4pt
Thus the theorem holds.
\end{proof}

\section{$(R',R)$-linear constacyclic codes of ratio $\al'/\al$}
\label{R'R c-cyclic over rings}

\begin{remark}\label{R-notation}\rm
In this section we turn to the general case and
take the notation in Definition~\ref{R'R-}:
\begin{itemize}
\item \vskip-5pt
 $R$, $R'$ are finite chain rings, $J(R)=R\pi$ of nilpotency index $\ell$,
$J(R')=R'\pi'$ of nilpotency index $\ell'$;
there is an epimorphism $\rho:R\to R'$,
hence they have the same residue field
$F:=\ol R=\ol{R'}$, and $\ell'\le\ell$, set $|F|=q$.
\item  \vskip-5pt
$\la\in R^\times$, ${\rm ord}_{F^\times}(\ol\la)=t$;
and $\la'=\rho(\la)\in R'^\times$, hence $\ol\la=\ol{\la'}$.
\item  \vskip-5pt
Integers $\al',\al>0$, $\al'\equiv\al~({\rm mod}~t)$,  $\gcd(\al,t)=1$;
further, $\al''=\min\{\al',\al\}$.
\end{itemize}
 \vskip-5pt
And further,
\begin{itemize}
\item \vskip-5pt
$n$ is a positive integer such that $\gcd(n,qt)=1$.
\item \vskip-5pt
$\delta\in(0,1-q^{-1})$.
\end{itemize}
 \vskip-5pt
Any $R[X]$-submodule $C$ of  
$\big( R'[X]\big/\langle X^{\al'\! n}-\la'\rangle \big)
  \times \big( R[X]\big/\langle X^{\al n}-\la\rangle \big)$
is an $(R',R)$-linear $(\la'\!,\la)$-constacyclic code of ratio $\al'/\al$,
code length equals $\al' n\ell'+\al n\ell$,
the relative minimum distance
$\Delta(C)=\frac{{\rm w}(C)}{\al' n\ell'+\al n\ell}$,
and the rate
${\rm R}(C)=\frac{\log_q|C|}{\al' n\ell'+\al n\ell}$.
\end{remark}

The finite chain ring $R$ has a unique minimal ideal $R\pi^{\ell-1}$, and
the following is an $R$-module epimorphism:
\begin{align*}
 R~\longrightarrow~ R\pi^{\ell-1},~~~ a~\longmapsto~ a\pi^{\ell-1}.
\end{align*}
The kernel of this $R$-module epimorphism is $R\pi$. Thus
it induces an $R$-module isomorphism
\begin{align}\label{eta}
\eta:~ F=R/R\pi~\mathop{\longrightarrow}^{\cong} R\pi^{\ell-1}, ~~~
\ol a~\longmapsto~ a\pi^{\ell-1}.
\end{align}

\begin{lemma} The following is a well-defined $R[X]$-module monomorphism
 (i.e., injective homomorphism):
\begin{equation}\label{into R[X]}\textstyle
\begin{array}{cccc}
\eta_{\al}: &F[X]\big/\langle X^{\al n}\!-\!\ol{\la}\rangle
   &{\longrightarrow}&
  R[X]\big/\langle X^{\al n}\!-\!\la\rangle, \\[4pt]
&\sum_{j=0}^{\al n-1}\ol a_jX^j & \longmapsto &
   \sum_{j=0}^{\al n-1}a_j\pi^{\ell-1}X^j,
\end{array}
\end{equation}
which preserves Hamming weights.
\end{lemma}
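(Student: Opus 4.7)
The plan is to extend the coefficient isomorphism $\eta$ of \eqref{eta} termwise to polynomials and then check that the result descends to the quotient rings. First I would define the auxiliary map $\tilde\eta_\alpha : F[X]\to R[X]$ by $\sum_j \bar a_j X^j \mapsto \sum_j a_j\pi^{\ell-1}X^j$, where each $a_j\in R$ is an arbitrary lift of $\bar a_j$. This is independent of the choice of lift because $\pi^\ell=0$, so replacing $a_j$ by $a_j+r\pi$ leaves $a_j\pi^{\ell-1}$ unchanged; hence $\tilde\eta_\alpha$ is a well-defined additive map.

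Next I would verify that $\tilde\eta_\alpha$ sends the ideal $\langle X^{\alpha n}-\bar\lambda\rangle$ into $\langle X^{\alpha n}-\lambda\rangle$: the generator maps to $\pi^{\ell-1}(X^{\alpha n}-\lambda)$ by lifting $\bar\lambda$ to $\lambda$, and the same lift-independence argument extends this to an arbitrary multiple. Thus $\tilde\eta_\alpha$ descends to the map $\eta_\alpha$ of \eqref{into R[X]}. For the $R[X]$-module property, note that the source carries its $R[X]$-structure via the surjection $R[X]\to F[X]$, so it suffices to check the identity on monomials: for $f=rX^k\in R[X]$ and $g=\bar a X^j$,
\[
\eta_\alpha(\bar f\cdot g)=\eta_\alpha(\overline{ra}\,X^{j+k})
=ra\,\pi^{\ell-1}X^{j+k}=f\cdot\eta_\alpha(g),
\]
where the reduction modulo $X^{\alpha n}-\lambda$ on the right exchanges $X^{\alpha n}$ for $\lambda$, exactly matching the reduction modulo $X^{\alpha n}-\bar\lambda$ (which yields $\bar\lambda$) followed by $\tilde\eta_\alpha$.

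For injectivity, take the canonical representative of degree less than $\alpha n$: $\eta_\alpha(g)=0$ in $R[X]/\langle X^{\alpha n}-\lambda\rangle$ forces $a_j\pi^{\ell-1}=0$ in $R$ for every $j$, and the bijectivity of $\eta$ then gives $\bar a_j=0$. Hamming-weight preservation is equally immediate from $\eta$: the $j$-th coordinate $\bar a_j$ vanishes if and only if $a_j\pi^{\ell-1}=\eta(\bar a_j)$ vanishes. The only slightly subtle point in the whole argument is the descent to the quotient, where one must simultaneously keep track of the lift-independence of $\tilde\eta_\alpha$ and of the compatibility between the ideals $\langle X^{\alpha n}-\bar\lambda\rangle$ and $\langle X^{\alpha n}-\lambda\rangle$; everything else is mechanical.
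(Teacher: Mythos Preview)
Your argument is correct and follows the same skeleton as the paper's: extend the coefficient isomorphism $\eta$ of \eqref{eta} termwise to polynomials, pass to the quotients, check the $R[X]$-linearity, and read off the Hamming-weight preservation from the bijectivity of $\eta$. The one genuine difference is in how injectivity is handled. The paper composes $\eta:F[X]\to R[X]$ with the quotient map and then computes the \emph{exact} kernel of $\tilde\eta:F[X]\to R[X]/\langle X^{\al n}-\la\rangle$: given $\eta(f)=(X^{\al n}-\la)g(X)$, it multiplies by~$\pi$ and uses that $X^{\al n}-\la$ is monic (hence a non-zerodivisor in $R[X]$) to force $\pi g=0$, so $g=\eta(d)$ for some $d\in F[X]$, whence $f=(X^{\al n}-\ol\la)d$. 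You instead use that monicity of $X^{\al n}-\la$ makes $R[X]/\langle X^{\al n}-\la\rangle$ free over $R$ with basis $1,X,\dots,X^{\al n-1}$, so the low-degree representative $\sum_{j<\al n}a_j\pi^{\ell-1}X^j$ can vanish in the quotient only if every coefficient does. Your route is shorter and perfectly adequate here; the paper's kernel computation is a bit more laborious but yields the slightly stronger statement $\ker\tilde\eta=\langle X^{\al n}-\ol\la\rangle$ directly, which in other contexts can be convenient.
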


\begin{proof}
The above $R$-module isomorphism Eq.\eqref{eta} induces an $R[X]$-module
monomorphism, denoted by $\eta$ again:
\begin{equation}\label{F[X] to}
\textstyle
\eta:~  F[X]~\rightarrow~ R[X],~~~
 \sum_{i}\ol a_iX^i~\mapsto~ \sum_{i}a_i\pi^{\ell-1}X^i.
\end{equation}
The $R[X]$-module structure of $F[X]$ is as follows:
$g(X)\cdot f(X)=\ol g(X) f(X)$ for $f(X)\in F[X]$ and $g(X)=\sum_i b_iX^i\in R[X]$,
where $\ol g(X)=\sum_{i}\ol b_iX^i\in F[X]$.
Because $\eta$ in Eq.\eqref{F[X] to} is an $R[X]$-module monomorphism, we have:
\begin{equation}\label{R[X]-homo}
\eta\big(\ol g(X) f(X)\big)=g(X)\eta\big(f(X)\big),
 \quad \forall~ f(X)\in F[X], ~g(X)\in R[X].
\end{equation}
For $f(X)\in F[X]$ we denote $\eta(f(X))=:f^\eta(X)$.
Combining the $\eta$ in Eq.\eqref{F[X] to} with the quotient homomorphism
$R[X]\to R[X]/\langle X^{\al n}-\la\rangle$,
we obtain the following $R[X]$-homomorphism
$$
\textstyle
\tilde\eta:~  F[X]~\rightarrow~ R[X]/\langle X^{\al n}-\la\rangle,~~~
 f(X)
~\mapsto~ f^\eta(X)~({\rm mod}~X^{\al n}-\la).
$$
Assume that $f(X)=\sum_{i}\ol a_iX^i\in{\rm Ker}(\tilde\eta)$, i.e.,
there is a $g(X)=\sum_i b_iX^i\in R[X]$ such that
$$\textstyle
\eta\big(f(X)\big)=\sum_{i} a_i\pi^{\ell-1}X^i = (X^{\al n}-\la) g(X).
$$
Since $\pi\sum_{i} a_i\pi^{\ell-1}X^i=0$, we have that
$(X^{\al n}-\la) \pi g(X)=0$; further, since $X^{\al n}-\la$ is monic,
we can see that $\pi g(X)=0$. Thus, for any coefficient $b_i$ of $g(X)$
 there is a $d_i\in R$ such that $b_i=d_i\pi^{\ell-1}$.
Then $g(X)=\eta \big( d(X) \big)$
where $d(X)=\sum_i\ol d_i X^i\in F[X]$.
By Eq.\eqref{R[X]-homo},
$$
\eta\big(f(X)\big)=(X^{\al n}-\la)\eta\big(d(X)\big) =\eta\big((X^{\al n}-\ol\la)d(X)\big).
$$
Since $\eta$ is injective, $f(X)=(X^{\al n}-\ol\la)d(X)\in F[X](X^{\al n}-\ol\la)$.
We get that ${\rm Ker}(\tilde\eta)\subseteq F[X](X^{\al n}-\ol \la)$.
The inverse inclusion is obvious. Thus
$$
 {\rm Ker}(\tilde\eta)= F[X](X^{\al n}-\ol \la)=\langle X^{\al n}-\ol \la\rangle,
$$
and $\tilde\eta$ induces the $R[X]$-module monomorphism
$\eta_\al$ in Eq.\eqref{into R[X]}.

For $f(X)=\sum_{j=0}^{\al n-1}\ol a_jX^j\in F[X]/\langle X^{\al n}-\ol\la\rangle$.
the image
$$\textstyle
 \eta_\al\big( f(X)\big)=\sum_{j=0}^{\al n-1} a_j\pi^{\ell-1}X^j
  \in R[X]/\langle X^{\al n} -\la\rangle.
$$
Obviously,
$$
 \ol a_j\ne 0\quad(\mbox{in $F$}) ~~\iff~~
    a_j\pi^{\ell-1}\ne 0\quad(\mbox{in $R$}).
$$
Thus ${\rm w}\big(f(X)\big)={\rm w}\big(\eta_{\al}(f(X))\big)$; i.e.,
$\eta_{\al}$ preserves the Hamming weights.
\end{proof}

Similarly, we have the following $R'[X]$-module monomorphism
\begin{equation*}
\textstyle
\eta':~  F[X]~\rightarrow~ R'[X],~~~
 \sum_{i}\ol a_iX^i~\mapsto~ \sum_{i}a_i\pi'^{\ell'-1}X^i;
\end{equation*}
and the following $R'[X]$-module monomorphism:
\begin{equation}\label{into R'[X]}\textstyle
\begin{array}{cccc}
\eta'_{\al'}: &F[X]\big/\langle X^{\al' n}\!-\!\ol{\la'}\rangle
   &{\longrightarrow}&
  R'[X]\big/\langle X^{\al' n}\!-\!\la'\rangle, \\[4pt]
&\sum_{j=0}^{\al n-1}\ol a_jX^j & \longmapsto &
   \sum_{j=0}^{\al n-1}a_j\pi'^{\ell'-1}X^j,
\end{array}
\end{equation}
which preserves Hamming weights.
Note that, through the epimorphism $\rho:R\to R'$,
both $\eta'$ and $\eta'_\al$ can be viewed as
$R[X]$-module homomorphisms.

Recall the notation in Section \ref{double twisted over fields}
(but with assumptions in Remark \ref{R-notation}
 of this section, e.g., $\ol\la=\ol{\la'}\in F$):
\begin{itemize}
\item
$\I_{\ol\la,\al}\subseteq\R_{\ol{\la},\al}
 =F[X]/\langle X^{\al n}-\ol{\la} \rangle$,~~(Eq.\eqref{I,I_la...})
\item
$\I_{\ol{\la'},\al'}\subseteq\R_{\ol{\la'},\al'}
 =F[X]/\langle X^{\al'\! n}-\ol{\la'}\rangle$,~~(Eq.\eqref{I,I_la...})
\item
$C_{a'\!,a}=F[X](a'(X),a(X))\subseteq \I_{\ol{\la'},\al'}\times\I_{\ol\la,\al}$,~~
(Eq.\eqref{C_a',a=ga})
\item
positive integers $n_1,n_2,\cdots$ satisfy Eq.\eqref{n_i...}
\item $C_{a'\!,a}^{(1)},~C_{a'\!,1}^{(2)},~\cdots$,~ in Eq.\eqref{sequence over F}.
\end{itemize}

By the $R[X]$-monomorphisms
Eq.\eqref{into R[X]} and Eq.\eqref{into R'[X]},
we can embed
$$C_{a'\!,a}^{(i)}=
 \big\{\,(g(X)a'(X),\,g(X)a(X))\;\big|\;g(X)\in F[X]\big\}\qquad
(\mbox{see Eq.\eqref{C_a',a=ga})} $$
 into
$$\big( R'[X]/\langle X^{\al' n_i}-\la'\rangle \big) \times
  \big( R[X]/\langle X^{\al n_i}-\la\rangle \big)$$
as follows:
\begin{align}
\big(g(X)a'(X),\,g(X)a(X)\big)~\longmapsto~
\big(\eta'(g(X))\eta'_{\al'}(a'(X)),\,\eta(g(X))\eta_\al(a(X))\big).
\end{align}
We denote the image of $C^{(i)}_{a'\!,a}$ by $\tilde C^{(i)}_{a'\!,a}$.
In this way, we obtain a sequence of
$(R'\!,R)$-linear $(\la'\!,\la)$-constacyclic codes of ratio
$\al'/\al$ as follows:
\begin{align}\label{sequence over R}
\tilde C_{a'\!,a}^{(1)},~ \tilde C_{a'\!,a}^{(2)},~\cdots,~\tilde C_{a'\!,a}^{(i)},~\cdots
\end{align}

\begin{theorem}\label{good sequence over R}
Let notation be as in Eq.\eqref{sequence over R}.
Assume that $\delta\in(0,1-q^{-1})$ satisfying that
$h_q(\frac{\al'\!+\al}{2\al''}\delta)<\frac{1}{2}$.
Then

{\bf(1)} $\lim\limits_{i\to\infty}\Pr\big(
  \Delta(\tilde C_{a'\!,a}^{(i)})
  >\frac{\al' +\al }{\al'\ell'+\al \ell}\!\cdot\!\delta
  \big)=1$.

{\bf(2)} $\lim\limits_{i\to\infty}
 \Pr\big(|\tilde C_{a'\!,a}^{(i)}|=q^{n_i\!-\!1}\big)=1$.
\end{theorem}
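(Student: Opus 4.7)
The plan is to reduce both statements to Theorem \ref{good sequence over F} via the weight-preserving $R[X]$-module monomorphisms $\eta_\alpha$ and $\eta'_{\alpha'}$ constructed in Eq.\eqref{into R[X]} and Eq.\eqref{into R'[X]}. By the construction right before Eq.\eqref{sequence over R}, the map $C^{(i)}_{a'\!,a}\to\tilde C^{(i)}_{a'\!,a}$ is induced componentwise by $\eta'_{\al'}$ and $\eta_\al$; since both are injective and preserve Hamming weight (addition of the two component weights), the induced map is an $F$-linear bijection that preserves the total Hamming weight of pairs.

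For part (2), the bijectivity gives $|\tilde C^{(i)}_{a'\!,a}|=|C^{(i)}_{a'\!,a}|=q^{\dim_F C^{(i)}_{a'\!,a}}$. By Theorem \ref{good sequence over F}(2), $\Pr\big(\dim_F C^{(i)}_{a'\!,a}=n_i-1\big)\to 1$, so $\Pr\big(|\tilde C^{(i)}_{a'\!,a}|=q^{n_i-1}\big)\to 1$.

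For part (1), weight preservation gives ${\rm w}(\tilde C^{(i)}_{a'\!,a})={\rm w}(C^{(i)}_{a'\!,a})$. Recalling from Remark \ref{F-notation} and Remark \ref{R-notation} that the code length of $C^{(i)}_{a'\!,a}$ over $F$ is $\al'n_i+\al n_i$, while the code length of $\tilde C^{(i)}_{a'\!,a}$ over $(R',R)$ is $\al'n_i\ell'+\al n_i\ell$, I compute
\begin{equation*}
\Delta\bigl(\tilde C^{(i)}_{a'\!,a}\bigr)
=\frac{{\rm w}(C^{(i)}_{a'\!,a})}{\al'n_i\ell'+\al n_i\ell}
=\frac{\al'+\al}{\al'\ell'+\al\ell}\cdot\Delta\bigl(C^{(i)}_{a'\!,a}\bigr).
\end{equation*}
Hence the event $\bigl\{\Delta(\tilde C^{(i)}_{a'\!,a})>\tfrac{\al'+\al}{\al'\ell'+\al\ell}\,\delta\bigr\}$ is exactly the event $\bigl\{\Delta(C^{(i)}_{a'\!,a})>\delta\bigr\}$, whose probability tends to $1$ by Theorem \ref{good sequence over F}(1) under the hypothesis $h_q\bigl(\tfrac{\al'+\al}{2\al''}\delta\bigr)<\tfrac12$.

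There is no real obstacle at this stage: the technical heart of the argument has already been carried out, namely (a) the probabilistic analysis of $C^{(i)}_{a'\!,a}$ over $F$ in Section \ref{double twisted over fields}, and (b) the construction of the injective, weight-preserving embeddings $\eta_\al,\eta'_{\al'}$ via the minimal ideals $R\pi^{\ell-1}$ and $R'\pi'^{\ell'-1}$. The present theorem is simply the transfer of the asymptotic goodness from the residue-field setting to the chain-ring setting through these embeddings; the only point requiring a moment's care is the correct bookkeeping of the length factor $\tfrac{\al'+\al}{\al'\ell'+\al\ell}$ entering the relative distance.
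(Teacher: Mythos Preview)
Your proposal is correct and follows essentially the same approach as the paper: both argue that the componentwise embedding $(\eta'_{\al'},\eta_\al)$ is injective and weight-preserving, so $|\tilde C^{(i)}_{a'\!,a}|=|C^{(i)}_{a'\!,a}|$ and ${\rm w}(\tilde C^{(i)}_{a'\!,a})={\rm w}(C^{(i)}_{a'\!,a})$, and then rescale $\Delta$ by the length factor $\tfrac{\al'+\al}{\al'\ell'+\al\ell}$ to reduce directly to Theorem~\ref{good sequence over F}.
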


\begin{proof}
Because both Eq.\eqref{into R[X]} and Eq.\eqref{into R'[X]}
are monomorphism and preserve
the Hamming weights,
${\rm w}(\tilde C_{a'\!,a}^{(i)})={\rm w}(C_{a'\!,a}^{(i)})$.
The code length of $C_{a'\!,a}^{(i)}$ is $\al'n+\al n$; while
the code length of $\tilde C_{a'\!,a}^{(i)}$ is $\al'n\ell'+\al n\ell$. So
$$
 \Delta(\tilde C_{a'\!,a}^{(i)})
=\frac{{\rm w}\big(C_{a'\!,a}^{(i)}\big)}{\al'n\ell'+\al n\ell}
=\frac{{\rm w}\big(C_{a'\!,a}^{(i)}\big)}{\al' n+\al n}
 \cdot\frac{\al' n+\al n}{\al'n\ell'+\al n\ell}
=\frac{\al' +\al }{\al'\ell'+\al \ell}\cdot\Delta(C_{a'\!,a}^{(i)});
$$
then
$$\textstyle
\Delta(C_{a'\!,a}^{(i)})\ge\delta
 ~\iff~
 \Delta(\tilde C_{a'\!,a}^{(i)})\ge \frac{\al' +\al }{\al'\ell'+\al \ell}\cdot\delta.
$$
Therefore,
the theorem follows from Theorem
\ref{good sequence over F} immediately.
\end{proof}

Finally, the following is a more precise version of Theorem \ref{main}.

\begin{theorem}\label{c good sequence over R}
Assume that $\delta\in(0,1-q^{-1})$ satisfying that
$h_q(\frac{\al'\!+\al}{2\al''}\delta)<\frac{1}{2}$.
 Then there is a sequence $C_1,C_2,\cdots$
of $(R'\!,R)$-linear $(\la'\!,\la)$-constacyclic codes $C_i$
of ratio $\al'/\al$
such that the length of $C_i$ goes to infinity,
$\lim\limits_{i\to\infty}{\rm R}(C_i)=\frac{1}{\al'\ell'\!+\al\ell}$,
and $\Delta(C_i)>\frac{\al' +\al }{\al'\ell'+\al \ell}\!\cdot\!\delta$ for all $i=1,2,\cdots$.
\end{theorem}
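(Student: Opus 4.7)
The plan is to derive Theorem \ref{c good sequence over R} as an almost immediate consequence of Theorem \ref{good sequence over R}, by converting the two ``probability tending to $1$'' statements into the existence of a deterministic code with the desired parameters for each sufficiently large $n_i$.

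First I would invoke Theorem \ref{good sequence over R} with the same $\delta$, giving the sequence of random $(R',R)$-linear $(\la',\la)$-constacyclic codes $\tilde C^{(i)}_{a'\!,a}$ of ratio $\al'/\al$ and cycle length $(\al' n_i,\al n_i)$. Part (1) and Part (2) of that theorem say respectively that
\[
\Pr\!\Big(\Delta(\tilde C^{(i)}_{a'\!,a})
   >\tfrac{\al'+\al}{\al'\ell'+\al\ell}\,\delta\Big)\longrightarrow 1,
\qquad
\Pr\!\big(|\tilde C^{(i)}_{a'\!,a}|=q^{n_i-1}\big)\longrightarrow 1.
\]
Intersecting these two events, the union-bound gives that for all sufficiently large $i$ the probability that both occur simultaneously is positive. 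Hence there exist concrete choices $(a'_i(X),a_i(X))\in \I_{\ol{\la'},\al'}\times\I_{\ol{\la},\al}$ realising both inequalities; take $C_i:=\tilde C^{(i)}_{a'_i\!,a_i}$.

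Then I would read off the three claimed properties. The code length of $C_i$ is $\al' n_i\ell'+\al n_i\ell$, which goes to infinity since $n_i\to\infty$ by Eq.\eqref{n_i...}. The minimum-distance condition $\Delta(C_i)>\frac{\al'+\al}{\al'\ell'+\al\ell}\,\delta$ is built into the construction. Finally $|C_i|=q^{n_i-1}$ gives
\[
{\rm R}(C_i)=\frac{\log_q|C_i|}{\al' n_i\ell'+\al n_i\ell}
 =\frac{n_i-1}{n_i(\al'\ell'+\al\ell)}
 \;\longrightarrow\;\frac{1}{\al'\ell'+\al\ell},
\]
as required.

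There is really no hard step here; all the substantive work—the probabilistic estimate over $F$ in Theorem \ref{c good sequence over F}, and the weight-preserving lifting via $\eta_\al$ and $\eta'_{\al'}$ in Theorem \ref{good sequence over R}—has already been carried out. The only mild care is in the derandomisation: one must ensure that the two ``good'' events in Theorem \ref{good sequence over R}(1)(2) can be met simultaneously, which is clear because each has probability tending to $1$, so their intersection also has probability tending to $1$ and in particular is non-empty for large $i$. The resulting deterministic sequence $C_1,C_2,\ldots$ is then asymptotically good in the sense of Definition \ref{goodness}, completing the proof and, via Theorem \ref{c good sequence over R}, the proof of the main Theorem \ref{main}.
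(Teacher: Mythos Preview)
Your proposal is correct and follows essentially the same argument as the paper: invoke Theorem~\ref{good sequence over R}, use that both probabilities tend to $1$ to select a concrete code $C_i=\tilde C^{(i)}_{a'_i,a_i}$ with $|C_i|=q^{n_i-1}$ and $\Delta(C_i)>\frac{\al'+\al}{\al'\ell'+\al\ell}\,\delta$, and read off the rate limit. The paper's proof is a bit terser about the derandomisation step, but your explicit union-bound observation is exactly what is implicit there.
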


\begin{proof} In Theorem \ref{good sequence over R}, we can take
$C_i=\tilde C_{a'\!,a}^{(i)}$ for $i=1,2,\cdots$ such that:
\begin{itemize}
\item \vskip-6pt
the length of $C_i$ is $\al' n_i\ell'+\al n_i\ell$;
\item \vskip-4pt
the relative minimum distance
$\Delta(C_i)>\frac{\al' +\al }{\al'\ell'+\al \ell}\cdot\delta$;
\item \vskip-4pt
the information length of $C_i$ is $\log_q|C_i|=n_i-1$, hence the rate
\par
 ${\rm R}(C_i) =\frac{n_i-1}{\al' n_i\ell'+\al n_i\ell}
=\frac{1}{\al'\ell'+\al\ell}-\frac{1}{\al' n_i\ell'+\al n_i\ell}$.
\end{itemize}
\vskip-4pt
Thus the theorem holds.
\end{proof}

\section{Conclusion}\label{conclusion}

The main contribution of this paper is that   
a very general type of codes is constructed and 
the asymptotic goodness of such codes is proved.

We introduced a type of codes:  
let $R$ and $R'$ be two finite commutative chain rings 
with an epimorphism $\rho: R\to R'$, let $\lambda\in R^\times$
and $\lambda'=\rho(\lambda)$, and $\alpha,\alpha',n$ be positive integers;
we call any $R[X]$-submodule $C$ of the $R[X]$-module 
$\big( R'[X]/\langle X^{\alpha'\! n}-\lambda'\rangle\big)
\times \big( R[X]/\langle X^{\alpha n}-\lambda\rangle\big)$
by an $(R', R)$-linear constacyclic code.
Such codes form an extensive class of codes.  
First, the two alphabets for the codes are finite commutative chain rings
which cover many alphabets used in coding.
Second, instead of cyclic structures, 
the more general constacyclic structures are considered.
Third, the two lengths of the two constacyclic circles are not necessarily equal.
Thus, $(R', R)$-linear constacyclic codes cover many well-known kinds of codes,
e.g., quasi-cyclic codes of fractional index, 
${\Bbb Z}_2{\Bbb Z}_4$-additive cyclic codes etc.

We proved in a random style that 
$(R', R)$-linear constacyclic codes are asymptotically good.
The usual probabilistic method applied to quasi-cyclic codes
could not applied to the constacyclic case directly. 
We take an algebraic skill to reform it into a developed probabilistic method
effective for studying quasi-constacyclic codes.
And then we reduced the proof for the asymptotic goodness 
of $(R', R)$-linear constacyclic codes to the quasi-constacyclic case, 
so that the proof of the asymptotic goodness of
$(R', R)$-linear constacyclic codes is completed. 
The developed probabilistic method is another contribution of this paper.


\end{document}